\newtheorem{theorem}{Theorem}
\newtheorem{lemma}[theorem]{Lemma}
\newtheorem{corollary}[theorem]{Corollary}
\newtheorem{proposition}[theorem]{Proposition}
\newtheorem{observation}[theorem]{Observation}
\theoremstyle{definition}
\newtheorem{definition}[theorem]{Definition}
\g@addto@macro\bfseries{\boldmath}
\newcommand{\R}{\mathbb{R}}
\newcommand{\s}{\mathcal{S}}
\newcommand{\q}{\hat{q}}
\newcommand{\A}{\mathcal{A}}
\g@addto@macro\bfseries{\boldmath}
\date{March 2020}
\title{Tukey Depth Histograms}
\begin{document}
\author[1]{Daniel Bertschinger}
\affil[1]{Department of Computer Science, ETH Z\"urich\\ \texttt{daniel.bertschinger@inf.ethz.ch}}

\author[2]{Jonas Passweg}
\affil[2]{Department of Computer Science, ETH Z\"urich\\ \texttt{jonas.passweg@student.ethz.ch}}

\author[3]{Patrick Schnider}
\affil[3]{Department of Mathematical Sciences, University of Copenhagen\\ \texttt{ps@math.ku.dk}}


\maketitle

\begin{abstract}
The Tukey depth of a flat with respect to a point set is a concept that appears in many areas of discrete and computational geometry.
In particular, the study of centerpoints, center transversals, Ham Sandwich cuts, or $k$-edges can all be phrased in terms of depths of certain flats with respect to one or more point sets.
In this work, we introduce the \emph{Tukey depth histogram} of $k$-flats in $\mathbb{R}^d$ with respect to a point set $P$, which is a vector $D^{k,d}(P)$, whose $i$'th entry $D^{k,d}_i(P)$ denotes the number of $k$-flats spanned by $k+1$ points of $P$ that have Tukey depth $i$ with respect to $P$.
As our main result, we give a complete characterization of the depth histograms of points, that is, for any dimension $d$ we give a description of all possible histograms $D^{0,d}(P)$. This then allows us to compute the exact number of possible such histograms.
\end{abstract}

\section{Introduction}

Many fundamental problems on point sets, such as the number of extreme points, the number of halving lines, or the crossing number do not depend on the actual location and distances of the points, but rather on some underlying combinatorial structure of the point set.
There is a vast body of work of combinatorial representations of point sets, at the beginning of which are the seminal series of papers by Goodman and Pollack \cite{goodman2, goodman1, semispaces}, where many important objects such as \emph{allowable sequences} and \emph{order types} are introduced.
In particular order types have proven to be a very powerful representation of point sets.
For many problems however, less information than what is encoded in order types is sufficient.
One example for such a problem is the determination of the \emph{depth} of a query point with respect to a planar point set.

Depth measures are a tool to capture how deep a query point lies within a given point set.
There is a number of depth measures that have been introduced, most famously \emph{Tukey depth} \cite{Tuk75} (also called \emph{halfspace depth}), \emph{Simplicial depth} \cite{LiuSimplicial} or \emph{Convex hull peeling depth} (see \cite{Aloupis2003GeometricMO, Hugg} or Chapter 58 in \cite{Handbook} for an overview of depth measures).
In this paper, we are mainly concerned with Tukey depth.
The Tukey depth of a query point $q$ with respect to a point set $P$ is the minimum number of points of $P$ that lie in a closed halfspace containing $q$.
For Tukey depth (and Simplicial depth), the depth of a query point in the plane can be computed knowing only the \emph{line rotational order}.

The line rotational order of the points of a point set $P$ around a query point $q$ is the order in which a directed line rotating around $q$ passes over the points of $P$, where we distinguish whether a point of $P$ is passed in front of, or behind $q$.
In \cite{semispaces}, this is called the \emph{local sequence of ordered switches}, and it is shown that knowing this local information for every point in a point set, one can uniquely determine the order type.\footnote{This is in contrast to the \emph{ray rotational order}, also called \emph{rotation system}, which is defined as the order in which a ray rotating around $q$ passes over the points of $P$.
Knowing only the ray rotational order around every point, one can generally not reconstruct the order type, in fact there can be up to $n-1$ different order types of $n$ points that give rise to the same ray rotational orders \cite{aichholzer}.}

In fact, the Tukey depth of a query point $q$ in the plane can be computed using even less information than the line rotational order around $q$: it suffices to know for each $k$, how many directed lines through $q$ and a point of $P$ have exactly $k$ points to their left.
This defines the \emph{$\ell$-vector} of $q$.
The Tukey depth of $q$ is now just the smallest $k$ for which the corresponding entry in the $\ell$-vector is non-zero.
It turns out, that many other depth measures can also be computed knowing only the $\ell$-vector of $q$ \cite{durocher}.
Another quantity that can be computed from this information only is the number of crossing-free perfect matchings on $P\cup\{q\}$, if $P$ is in convex position and $q$ is in the convex hull of $P$ \cite{Ruiz-Vargas2017}.
In \cite{Ruiz-Vargas2017}, a characterization of all possible $\ell$-vectors is given, phrased in terms of \emph{frequency vectors}, which is an equivalent object.
Knowing the $\ell$-vector of every point in a point set $P$ thus still gives us a lot of information about this point set.
For example, as this allows us to compute the simplicial depth $\text{sd}(P,q)$ of each point $q$ in $P$, that is, the number of triangles spanned by $P\setminus \{q\}$ that have $q$ in their interior, this also allows us to compute the crossing number of $P$, which is just
\[ \text{cr}(P)=\binom{n}{4}-\sum_{q\in P}\text{sd}(P,q). \]

Interesting objects emerge after forgetting yet another piece of information: instead of knowing the $\ell$-vector of each point, assume we only know the sum of all $\ell$-vectors.
This corresponds to knowing for each $j$ the number of \emph{$j$-edges} that is, knowing the histogram of $j$-edges.
The number of $j$-edges that a point set admits is a fundamental question in discrete geometry and has a rich history, see e.g. \cite{wagner_survey}, Chapter 4 in \cite{felsner_book} or Chapter 11 in \cite{matousek_book} and the references therein.

In this work, we mainly investigate a similar concept: depth histograms of points.
This corresponds to knowing for each $j$ how many points of Tukey depth $j$ are in the point set.
We give a complete characterization of possible such histograms for point sets in general position.
In particular, we will show the following:
\begin{theorem}\label{th:defhisto}
A vector $D^{0,d}$ is a depth histogram of a point set in general position in $\R^d$ if and only if for all nonzero entries $D^{0,d}_i$ with $i \geq 2$ we have \[\sum_{j=1}^{i-1} D^{0,d}_j \geq 2i + d - 3.\] 
\end{theorem}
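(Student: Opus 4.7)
The theorem has two directions, which I would treat separately.

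For the \emph{necessity} direction, I would fix an arbitrary point $p \in P$ of depth $i \geq 2$ and produce $2i + d - 3$ distinct points of $P$ of depth strictly less than $i$. Using general position together with the halfspace definition of depth, the first step is to select a closed halfspace $H^+$ with $p \in \partial H^+$, $|H^+ \cap P| = i$, and whose bounding hyperplane contains exactly $d$ points of $P$: the point $p$ together with $d - 1$ others $q_1, \ldots, q_{d-1}$. (The case $i < d$ is handled separately by the simpler observation that $P$ has at least $d + 1$ convex hull vertices, each of depth $1$, already supplying the required lower-depth count.) Assuming $i \geq d$, the $i - 1$ points of $P$ in $H^+$ other than $p$ -- namely the $q_j$ on the boundary and the $i - d$ points strictly interior to $H^+$ -- each have depth at most $i - 1$: for an interior point $x$, translate $\partial H^+$ parallel to itself through $x$ to produce a halfspace containing $x$ but excluding $p$ with strictly smaller count; for a boundary point $q_j$, rotate $\partial H^+$ around the $(d-2)$-flat $\mathrm{aff}\{q_1, \ldots, q_{d-1}\}$ in the direction that expels $p$ to obtain a halfspace through $q_j$ with at most $i - 1$ points.

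The crux of the necessity proof, and the main obstacle, is locating the remaining $i + d - 2$ points of depth less than $i$, which must come from outside $H^+$. My plan is to exploit the fact that $p$ lies on the boundary of the convex polytope $R_i := \{x \in \R^d : \mathrm{depth}_P(x) \geq i\}$: there is a full cone of supporting hyperplanes of $R_i$ at $p$, corresponding to different minimizing halfspaces through $p$. Choosing several such hyperplanes with independent normal directions yields further minimizing halfspaces for $p$, each of which contributes its own batch of depth-$(\leq i-1)$ points by the translation/rotation arguments above. A careful disjointness bookkeeping then assembles them into a set of $2i + d - 3$ distinct low-depth points. The bound matches the extremal configuration (the regular $(2i - 1)$-gon plus center in the plane, generalized by prepending $d - 2$ suitably placed extra hull points to land in $\R^d$), so this tight example guides the disjointness count.

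For the \emph{sufficiency} direction, I would construct a realizing point set by building up valid vectors from the base case of the simplex ($D_1 = d+1$, all other entries $0$). To realize a more general valid histogram, place concentric convex shells whose $i$-th shell contains $D_i$ points of depth exactly $i$: an outermost convex polytope with $D_1$ vertices, then $D_2$ points placed just inside to have depth $2$, and so on up to the innermost nonzero layer. The inequality $\sum_{j<i} D_j \geq 2i + d - 3$ is exactly what is needed at each step, since it guarantees enough outer support to place a point of depth $i$ at the center of a tight configuration of $2i + d - 3$ lower-depth points generalizing the $(2i-1)$-gon-plus-center. The principal technical burden here is to verify that each insertion realizes the targeted $D_i$ increment without accidentally perturbing the depths of previously placed points and without violating general position; this requires either an explicit symmetric construction that places the shells with enough rotational freedom, or a perturbation lemma asserting that small generic movements preserve all prescribed depths.
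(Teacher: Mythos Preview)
Your necessity argument has a real gap at its crux. You correctly extract $i-1$ low-depth points from the witnessing halfspace $H^+$, but the plan for the remaining $i+d-2$ points---choosing several supporting hyperplanes of $R_i$ at $p$ and doing ``disjointness bookkeeping''---is not an argument. Distinct minimizing halfspaces through $p$ can share most of their $i-1$ non-$p$ points, and nothing in your outline rules out the scenario where many points outside $H^+$ have depth $\geq i$; you would then never assemble $2i+d-3$ distinct low-depth points this way. The paper sidesteps the whole issue with a truncation trick you are missing: one first proves a removal lemma (deleting any $q$ with $\text{td}_P(q)\geq\text{td}_P(p)$ leaves $\text{td}(p)$ unchanged), then strips away \emph{all} points of depth $\geq i$ except a single one. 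The surviving point set has size $\big(\sum_{j<i}D^{0,d}_j\big)+1$ and still contains a point of depth $i$, so the max-depth bound $\text{td}\leq\frac{n-d+2}{2}$ applied to this smaller set gives the inequality directly. No identification of specific low-depth points outside $H^+$ is ever needed.

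Your sufficiency outline is in the same spirit as the paper's---add points from low depth upward, drop each new point at the center, and push it outward to its target depth---but you have not supplied the mechanism that makes the push safe and repeatable, and you correctly flag this as the principal burden. The paper resolves it by maintaining, throughout the construction, that the current point set is in \emph{symmetric} or \emph{eccentric configuration} about an empty center: every hyperplane through the center and $d-1$ other points halves (or nearly halves) the rest. This guarantees that a fresh point placed at the center attains the maximum possible depth $\lfloor(n-d+2)/2\rfloor$, which by the histogram inequality is at least the next target depth $j$; a continuity lemma then lets one slide it out to depth exactly $j$ while, since it remains the unique deepest point, all other depths stay fixed. The substantial work---most of the sufficiency section---is showing, by induction on $d$ via stereographic projection to spherical configurations one dimension down, that push-out directions can always be chosen so that the symmetric/eccentric property is restored after each pair of insertions. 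A generic perturbation lemma will not replace this: what must be preserved is a global halving condition on all hyperplanes through the center, not a local open property.
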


In fact, both depth histograms of points as well as $j$-edges can be viewed as instances of a more general definition, that of histograms of $j$-flats, which we will introduce in the following.
For some of our results we give versions in this general setting.
We hope that this work can serve as a small step in the systematic study of these histograms.

In order to define histograms of $j$-flats, we first define the Tukey depth of a flat:

\begin{definition}
Let $Q$ be a set of $k+1$ points in $\mathbb{R}^d$, $k<d$, which span a unique $k$-flat $F$.
The \emph{affine Tukey depth} of $Q$ with respect to a point set $P$, denoted by $\emph{atd}_P(Q)$, is the minimum number of points of $P$ in any closed halfspace containing $F$.
The \emph{convex Tukey depth} of $Q$ with respect to $P$, denoted by $\emph{ctd}_P(Q)$, is the minimum number of points of $P$ in any closed halfspace containing $\text{conv}(Q)$.
\end{definition}

Note that for $k=0$ both definitions coincide with the standard definition of Tukey depth, and we just write $\emph{td}_P(q)$ in this case.
Further note that if $P\cup Q$ is in convex position, then $\emph{atd}_P(Q)=\emph{ctd}_P(Q)$.
Several results in discrete geometry can be phrased in terms of this generalized Tukey depth.
For example, the center transversal theorem \cite{CT_Dolnikov, CT_Zivaljevic} states that for any $j+1$ point sets $P_0,\ldots, P_j$ in $\mathbb{R}^d$, there exists a $j$-flat (not necessarily spanned by points of the point sets) that has affine Tukey depth $\frac{|P_i|}{d+1-j}$ with respect to $P_i$, for each $i\in\{0,\ldots,j\}$.
For $j=0$ and $j=d-1$, we retrieve the centerpoint theorem \cite{CP} and Ham-Sandwich theorem \cite{HS} as boundary cases.

\begin{definition}
Let $P$ be a set of points in $\mathbb{R}^d$.
The \emph{affine Tukey depth histogram of $j$-flats}, denoted by $D^{j,d}(P)$, is a vector whose entries $D^{j,d}_k(P)$ are the number of subsets $Q\subset P$ of size $j+1$ whose affine Tukey depth is $i$.
Similarly, replacing affine Tukey depth with convex Tukey depth, we define the \emph{convex Tukey depth histogram of $j$-flats}, denoted by $cD^{j,d}(P)$.
\end{definition}

In the following, we will also call affine Tukey depth histograms just \emph{depth histograms}, that is, unless we specify the \emph{convex}, we always mean an affine Tukey depth histogram.
Note however that for $j=0$ or if $P$ is in convex position, the two histograms coincide.

Many problems in discrete geometry can also be phrased in terms of depth histograms.
For example, the number of extreme points of a point set $P$ just corresponds to the entry $D^{0,d}_1(P)$ (note that each point of $P$ has Tukey depth at least 1).
Further, the number of $j$-edges or, more generally, $j$-facets corresponds to the entry $D^{d-1,d}_j(P)$.
For a further example consider the following problem, studied in \cite{circles2, circles1, ramos}: let $P$ be a set of $n$ points in general position in the plane.
Are there always two points in $P$ such that any circle through both of them contains at least $\frac{n}{4}$ points of $P$ both inside and outside of the circle?
Using parabolic lifting, proving that for any point set $P$ of size $n$ in convex position in $\mathbb{R}^d$, we have $D^{1,3}_{n/4}(P)>0$ would imply a positive answer to the above question \cite{ramos}.

\section{The condition is necessary}
\label{sec:necessary}

The goal of this section is to show that the condition $\sum_{j=1}^{i-1} D^{0,d}_j \geq 2i + d - 3$ for all $i \geq 2$ with $D^{0,d}_i>0$ is necessary for any depth histogram. To prove this, we first give an upper bound on the depth of any point. We will later give examples of configurations of point sets showing that this result is in fact tight.
Further, we apply similar arguments to show necessary conditions for general depth histograms.

\begin{lemma}
\label{lm:maxdepth}
For any point set $P \subseteq \R^d$ and any point $p \in P$ we have $\emph{td}_P(p) \leq \frac{n-d+2}{2}$.
\end{lemma}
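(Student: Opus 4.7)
The plan is to exhibit an explicit closed halfspace through $p$ containing at most $1+\floor{(n-d)/2}$ points of $P$; since $1+\floor{(n-d)/2}\leq (n-d+2)/2$, this suffices. To that end I pick any $d-1$ other points $q_1,\ldots,q_{d-1}$ of $P$ and let $M$ be the hyperplane spanned by them together with $p$. By general position, $M\cap P = \{p,q_1,\ldots,q_{d-1}\}$, so the remaining $n-d$ points of $P$ are partitioned by $M$ into two open halfspaces; let $s\leq\floor{(n-d)/2}$ denote the size of the smaller part.

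The next step exploits that $p$ is a vertex of the $(d-1)$-simplex on $\{p,q_1,\ldots,q_{d-1}\}$ inside $M$: there is a $(d-2)$-flat $K\subset M$ through $p$ that strictly separates $p$ from the other $q_i$'s within $M$. Consider the one-parameter family of hyperplanes of $\R^d$ containing $K$; each such hyperplane $L$ contains $p$ but keeps $q_1,\ldots,q_{d-1}$ strictly on one side, so the opposite closed halfspace $L^-$ meets $P\cap M$ only in $p$, giving $|L^-\cap P| = 1 + |(P\setminus M)\cap\text{int}(L^-)|$. Setting up coordinates with $M=\{x_d=0\}$, $K=\{x_{d-1}=x_d=0\}$, and $L_\beta=\{x_{d-1}+\beta x_d=0\}$, a direct computation shows that $|(P\setminus M)\cap\text{int}(L_\beta^-)|$ is a step function of $\beta$ equal to $|P\cap\text{int}(M^+)|$ for all sufficiently negative $\beta$ and $|P\cap\text{int}(M^-)|$ for all sufficiently positive $\beta$.

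Taking $\beta$ sufficiently extreme on whichever side corresponds to the smaller count produces $|L_\beta^-\cap P|\leq 1+s$, and therefore $\td_P(p)\leq 1+s\leq (n-d+2)/2$. The main conceptual point is noticing that although $M$ on its own contributes $d$ points (making it useless as a depth-realizing boundary), a small tilt of $M$ about a $(d-2)$-flat separating $p$ from the other $q_i$'s absorbs all of them into the ``wrong'' halfspace, leaving a closed halfspace whose boundary intersects $P$ only at $p$ and paying only one unit above the minimum-side count of $M$.
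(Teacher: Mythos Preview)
Your proof is correct, and it takes a genuinely different route from the paper's.

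The paper argues contrapositively: starting from a \emph{witnessing} halfspace for $p$ (one achieving the minimum $k=\td_P(p)$), it rotates the bounding hyperplane, arguing that each time a point is swept onto the hyperplane it must come from the non-witnessing side (otherwise the depth would drop below $k$). After $d-1$ such rotations the hyperplane carries $d$ points of $P$, and both open sides must still contain at least $k-1$ points, giving $n\ge 2(k-1)+d$.

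Your argument runs in the opposite direction and is more direct. You begin with an \emph{arbitrary} hyperplane $M$ through $p$ and $d-1$ further points, so that $M$ already carries $d$ points of $P$, and then perform a single tilt about a $(d-2)$-flat $K\subset M$ through $p$ that has all the $q_i$ on one side. This pushes the $q_i$ off the boundary in one stroke and produces a closed halfspace meeting $P$ in exactly $p$ together with the smaller open side of $M$, hence at most $1+\lfloor(n-d)/2\rfloor$ points. You never use the actual value of $\td_P(p)$; you simply exhibit the halfspace.

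What each approach buys: the paper's iterative rotation makes transparent \emph{why} the extra $d-2$ in the bound appears --- one point is absorbed onto the hyperplane per rotation --- but it requires tracking an invariant over several steps. Your construction is a one-shot argument and avoids that bookkeeping entirely; the price is the small coordinate computation verifying the limiting behavior of $L_\beta^-$. Both arguments rely on general position in the same way (so that the hyperplane through $d$ chosen points contains no further points of $P$), and both implicitly need $n\ge d$.
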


\begin{proof}
Let $P \subseteq \R^d$ and let $p \in P$ be any point with $\emph{td}_P(p) = k$. We will now show that any such point set consists of at least $2k - 2 + d$ points. This then proves the lemma as any point of larger depth would lead to a point set containing more than $n$ points. 

Consider a witnessing halfspace $h_p$ of $P$ and its bounding hyperplane $h$. Note that since $h_p$ is a closed halfspace, $h \subseteq h_p$. Further, we can assume that $h$ contains $p$, as otherwise we can translate $h$ and get a new halfspace $h_p'$ containing at most the same points (possibly even fewer). Additionally, $h$ cannot contain any other point from $P$, that is, otherwise we could rotate $h$ around $p$ slightly, and get a halfspace containing fewer than $k$ points. 

Following the depth of $p$, we know that $h_p$ contains $k$ points. Therefore the question is how many points there are in $\R^d \setminus h_p$. If there are fewer than $k-1$, we directly have a contradiction to the depth of $p$. On the other hand, if there are fewer than $k+d-2$, we also get a contradiction. Note that we can rotate $h$ in any direction until one point, say $q$, changes halfspaces. If $q$ was in $h_p$ before, then we found a halfspace containing $p$ and $k-1$ points in total, contradicting the depth of $p$; thus, $q$ was in $\R^d \setminus h_p$. We can do this until there are $d$ points on $h$ (one of them being $p$) and both halfspaces still need to contain at least $k-1$ points. Thus, in total we need to have $2 (k-1) + d$ points. 
\end{proof}

A similar line of reasoning can be applied to $k$-faces and convex Tukey depth:

\begin{proposition}
Let $P$ be a point set in $\mathbb{R}^d$ and assume that $P$ spans a $k$-face $F$ with $\emph{ctd}_P(F)=i$.
Then $P$ spans at least $2\binom{i-k-1}{m+1}$ $m$-faces of smaller depth.
In other words, for any depth histogram $cD^{k,d}$ and all nonzero entries $cD^{k,d}_i$ with $i \geq 2$ we have \[\sum_{j=1}^{i-1} cD^{m,d}_j \geq 2\binom{i-k-1}{m+1}.\] 
\end{proposition}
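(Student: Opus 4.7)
The plan is to generalize the argument of Lemma~\ref{lm:maxdepth} from a point to the $k$-face $F$. First I would take any witness closed halfspace $H$ with $\text{conv}(F) \subset H$ and $|H \cap P| = i$. Paralleling the reduction at the start of the proof of Lemma~\ref{lm:maxdepth}, I may assume (after translating and rotating the bounding hyperplane $h$ in ways that do not increase the count on the witness side) that $h$ contains $\text{aff}(F)$ and meets $P$ in exactly the $k+1$ vertices of $F$. This forces the open halfspace on the $H$-side to contain exactly $i - k - 1$ points of $P$.

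The key claim is that every $(m+1)$-subset $Q$ of those $i - k - 1$ interior points spans an $m$-face with $\text{ctd}_P(Q) \leq i - k - 1 < i$. To see this, push $h$ by a sufficiently small $\varepsilon > 0$ into the interior of $H$, obtaining a closed halfspace $H' \subset H$. For small $\varepsilon$, $H'$ still contains $\text{conv}(Q)$ (since $Q$ lies strictly inside $H$) but no longer contains any of the $k + 1$ vertices of $F$, so $|H' \cap P| = i - k - 1$, yielding the depth bound. This alone contributes $\binom{i - k - 1}{m + 1}$ distinct $m$-faces of depth strictly below $i$.

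For the factor of $2$, I would run a symmetric argument on the opposite closed halfspace $H^{\circ}$. Since $\text{conv}(F) \subset h \subset H^{\circ}$, the depth condition forces $|H^{\circ} \cap P| \geq i$, so open $H^{\circ}$ also contains at least $i - k - 1$ points of $P$, necessarily disjoint from those in open $H$. In the balanced case $|H^{\circ} \cap P| = i$ the previous paragraph's argument applies verbatim on that side and produces another $\binom{i - k - 1}{m + 1}$ distinct $m$-faces, for a total of $2 \binom{i - k - 1}{m + 1}$.

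The step I expect to cost the most work is the asymmetric case $|H^{\circ} \cap P| > i$: the naive push-by-$\varepsilon$ on the opposite side only bounds the depth of an $(m+1)$-subset of open $H^{\circ}$ by $n - i$, which may exceed $i$ when $n$ is large. I would handle this by adapting the rotation-of-$h$ argument from Lemma~\ref{lm:maxdepth}: rotate $h$ around $\text{aff}(F)$ to a position at which the opposite side also becomes witness-like, so that exactly $i - k - 1$ of its interior points admit the same depth-bounding halfspace construction. Making this rotation work without spoiling the count on the first side is where the bulk of the technical care is needed.
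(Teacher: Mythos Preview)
Your opening reduction is where the trouble lies. Asking that the bounding hyperplane $h$ contain $\text{aff}(F)$ while still having $|H\cap P|=i$ amounts to asking for a witness of the \emph{affine} Tukey depth with value $i$; but $\text{atd}_P(F)$ can be strictly larger than $\text{ctd}_P(F)=i$, so no such $h$ need exist. (Concretely: in $\mathbb{R}^3$ take $F=\{(0,0,0),(1,0,0)\}$ and place four more points at $(10,\pm 1,0)$ and $(10,0,\pm 1)$; then $\text{ctd}_P(F)=2$ via $\{x\le 5\}$, but every halfspace whose boundary contains the $x$-axis keeps at least four points.) The translation-and-rotation moves of Lemma~\ref{lm:maxdepth} do not transfer: once you rotate $h$ around a vertex of $F$ toward $\text{aff}(F)$, the halfspace still contains $\text{conv}(F)$, so the count is forced to stay $\ge i$ --- it may well rise above $i$, and you cannot prevent that.

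More importantly, the whole ``asymmetric case'' you flag as the hard part evaporates with a one-line observation, and this is exactly what the paper does. Do not rotate at all; only translate. Keep the witness $h_F$ (count $i$) as is: the $i-k-1$ points of $h_F\setminus F$ give the first batch. For the second batch, translate $h$ parallel to itself \emph{into the complement} until the far closed halfspace $h_2$ contains exactly $i-k-1$ points of $P$; since $h_2\cap h_F=\emptyset$, this batch is disjoint from the first. Now $h_2$ itself witnesses $\text{ctd}_P(Q)\le i-k-1<i$ for every $(m{+}1)$-subset $Q\subset h_2\cap P$, regardless of how many points the opposite side had to begin with. No rotation, no balancing, no case analysis.
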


\begin{proof}
Consider a witnessing halfspace $h_F$ of $F$ and its bounding hyperplane $h$.
As $\emph{ctd}_P(F)=i$ and $F$ is spanned by $k+1$ points, the halfspace $h_F$ contains $i-k-1$ other points.
Looking at the complement of $h_F$ and translating $h$, we can find another halfspace $h_2$ containing $i-k-1$ points of $P$, with $h_F\cap h_2=\emptyset$.
We have thus found two disjoint subsets of $P$, each of size $i-k-1$.
Further, each $m$-face spanned by $m+1$ points in a subset has depth at most $i-1$, as witnessed by a translation of $h_F$ or $h_2$, respectively.
\end{proof}

The second property that we need in this section is that we can delete points of high depth without changing the depth of points of lower depth.
\begin{lemma}\label{lm:removehighdepth}
For any point set $P \subseteq \R^d$ and any two points $p,q \in P$ with $\emph{td}_P(p) \leq \emph{td}_P(q)$ we have $\emph{td}_{P}(p) = \emph{td}_{P \setminus q}(p)$.
\end{lemma}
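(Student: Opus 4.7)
The plan is to show the non-trivial direction $\td_{P\setminus q}(p) \geq \td_P(p)$ by contradiction, using the hypothesis $\td_P(p) \leq \td_P(q)$ to force a closed halfspace through $q$ that would violate the definition of $\td_P(q)$. The reverse inequality $\td_{P\setminus q}(p) \leq \td_P(p)$ is automatic, since any closed halfspace contains at least as many points of $P$ as of $P\setminus\{q\}$.

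Set $k := \td_P(p)$ and suppose for contradiction $\td_{P\setminus q}(p) \leq k-1$. I would pick a witnessing closed halfspace and translate it toward $p$ to obtain $h = \{x : \theta \cdot x \geq \theta \cdot p\}$ (so $p \in \partial h$) with $|h \cap (P\setminus\{q\})| \leq k-1$. Since $\td_P(p) = k$, this forces $q \in h$, giving $\theta \cdot q \geq \theta \cdot p$ and $|h\cap P| = k$. The core task is then to manufacture from $h$ a closed halfspace $h'$ containing $q$, excluding $p$, and introducing no new points of $P$: any such $h'$ satisfies $|h' \cap P| \leq k-1$, so $\td_P(q) \leq k-1 < \td_P(p)$, contradicting $\td_P(p) \leq \td_P(q)$.

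If $\theta \cdot q > \theta \cdot p$ strictly, I simply take $h' = \{x : \theta\cdot x \geq \theta\cdot q\}$, which is contained in $h$, has $q$ on its boundary, and excludes $p$. The main obstacle is the degenerate case $\theta \cdot q = \theta \cdot p$, where both $p$ and $q$ lie on the same bounding hyperplane and no pure translation separates them. I would resolve this by an infinitesimal tilt of $\partial h$ about $q$: let $\theta_\eps := \theta + \eps(q-p)$ for a small $\eps > 0$ and take $h' := \{x : \theta_\eps \cdot x \geq \theta_\eps \cdot q\}$. The identity $\theta_\eps \cdot (q - p) = \eps\|q-p\|^2 > 0$ places $p$ strictly outside $h'$ while $q$ remains on $\partial h'$; by continuity, every $y \in P$ with $\theta \cdot y < \theta \cdot p$ also satisfies $\theta_\eps \cdot (y - q) < 0$ for $\eps$ small, and since $P$ is finite one such $\eps$ works uniformly. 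Thus $h' \cap P \subseteq (h\cap P)\setminus\{p\}$ and $|h' \cap P| \leq k-1$, closing the contradiction.
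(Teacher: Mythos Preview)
Your argument is correct and follows essentially the same route as the paper: both show that a witnessing halfspace for $p$ cannot contain $q$, since otherwise one could translate (or tilt) it to obtain a closed halfspace through $q$ with strictly fewer points of $P$, contradicting $\td_P(p)\le\td_P(q)$. Your version is in fact more careful than the paper's, which simply translates $h_p$ until $q$ is on the boundary and asserts that $p$ drops out; your explicit tilt $\theta_\eps=\theta+\eps(q-p)$ cleanly handles the degenerate case $\theta\cdot p=\theta\cdot q$ that the paper glosses over.
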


\begin{proof}
Let $p,q \in P$ be two points of any point set such that $\emph{td}_P(p) \leq \emph{td}_P(q)$. Consider any witnessing halfspace $h_p$ of $p$. If $q \in h_p$ then we can translate $h_p$ until $q$ lies on the boundary; let us denote the new halfspace as $h_q$. Clearly $h_q$ contains at most $|h_p|-1$ points of $P$ (as the point $p$ is not in there). Therefore $\emph{td}_P(p) > \emph{td}_P(q)$ which is a contradiction. Hence, $q \notin h_p$ and thus deleting $q$ can never change the depth of $p$. 
\end{proof}

This lemma has direct applications for histograms, that is, by repeatedly applying the lemma one can easily show the following.

\begin{proposition}\label{prop:removingpoints}
For any depth histogram $[a_1,a_2,\ldots,a_{m-1},a_m]$, both $[a_1,a_2,\ldots,a_{i-1}, a_i]$ and $[a_1,a_2,\ldots,a_{i-1},1]$ with $i \leq m$ are depth histograms. 
\end{proposition}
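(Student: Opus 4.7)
The plan is to obtain both target histograms from $P$ by iteratively applying Lemma~\ref{lm:removehighdepth}, deleting, at each step, a point of \emph{maximum} depth in the current point set. The key observation that drives the argument is the following: if $q$ is a point of maximum depth in a point set $P'$, then every other point $p \in P'$ satisfies $\td_{P'}(p) \leq \td_{P'}(q)$, so Lemma~\ref{lm:removehighdepth} yields $\td_{P' \setminus \{q\}}(p) = \td_{P'}(p)$. In other words, removing a maximum-depth point decreases the corresponding histogram entry by exactly one and leaves every other entry untouched.

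To prove the first claim, start from a point set $P$ realizing $[a_1, \ldots, a_m]$ and, for $i \leq m$, repeatedly delete a point of maximum depth. This first reduces the entry at position $m$ to $0$, then the entry at position $m-1$, and so on, until every entry with index $> i$ has been zeroed out. By the observation above, the entries $a_1, \ldots, a_i$ are preserved throughout, so the resulting point set realizes $[a_1, \ldots, a_{i-1}, a_i]$.

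To prove the second claim, continue the same process one stage further: once the histogram of the current point set is $[a_1, \ldots, a_{i-1}, a_i]$, the maximum depth is $i$, and removing points of depth $i$ one at a time (again applying Lemma~\ref{lm:removehighdepth}, now with $q$ chosen among the depth-$i$ points) preserves the entries $a_1, \ldots, a_{i-1}$ while reducing the entry at position $i$ by one each time. Stopping when exactly one depth-$i$ point remains gives a point set whose histogram is $[a_1, \ldots, a_{i-1}, 1]$.

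There is essentially no substantive obstacle here, since the work has already been done in Lemma~\ref{lm:removehighdepth}; the proof is a bookkeeping argument. The only minor points to verify are that at each step a point of current maximum depth is available to delete (immediate from the invariant that the histogram above the current truncation level has already been zeroed out) and that the lemma's hypothesis is checked with respect to the \emph{current} point set rather than the original one — which is exactly why we always remove a maximum-depth point rather than trying to remove a fixed set of points all at once.
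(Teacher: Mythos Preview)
Your proof is correct and follows exactly the approach the paper indicates: the paper does not spell out a proof but simply states that the proposition follows ``by repeatedly applying the lemma'' (Lemma~\ref{lm:removehighdepth}), which is precisely the iterative removal of maximum-depth points you carry out. Your write-up fills in the bookkeeping the paper leaves implicit, with no divergence in method.
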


Combining both, the knowledge about the maximal possible depth and that we can remove deep points allows us to prove that the condition in Theorem \ref{th:defhisto} is necessary. For ease of reading here the necessary condition of the theorem once more. 
\begin{corollary}[Necessary condition of Theorem \ref{th:defhisto}]
\label{cor:necessary}
For any depth histogram $D^{0,d}$ and all nonzero entries $D^{0,d}_i$ with $i \geq 2$ we have \[\sum_{j=1}^{i-1} D^{0,d}_j \geq 2i + d - 3.\] 
\end{corollary}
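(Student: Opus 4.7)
The plan is to chain Proposition~\ref{prop:removingpoints} with Lemma~\ref{lm:maxdepth} in a short two-step argument. Given an index $i \geq 2$ with $D^{0,d}_i > 0$, the goal is to shrink the configuration down to a small point set whose deepest point has depth exactly $i$, and then invoke the upper bound on depth to deduce a lower bound on the remaining entries of the histogram.

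First I would apply Proposition~\ref{prop:removingpoints} to the given histogram to obtain a point set $P'$ realizing the truncation $[D^{0,d}_1, \ldots, D^{0,d}_{i-1}, 1]$. By construction, this set has size
\[ n' \;=\; 1 + \sum_{j=1}^{i-1} D^{0,d}_j, \]
no points of depth greater than $i$, and exactly one point $p \in P'$ of depth $i$.

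Then I would apply Lemma~\ref{lm:maxdepth} to $p \in P'$, which yields $i \leq (n' - d + 2)/2$. Substituting the value of $n'$ and rearranging gives $\sum_{j=1}^{i-1} D^{0,d}_j \geq 2i + d - 3$, which is the claimed inequality.

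I do not foresee any real obstacle, since the work has already been carried out in Lemma~\ref{lm:maxdepth} (the upper bound on depth in terms of the cardinality of the set) and in Lemma~\ref{lm:removehighdepth} / Proposition~\ref{prop:removingpoints} (which allow us to trim away points of depth strictly greater than $i$, and all but one point of depth $i$, while preserving the entries $D^{0,d}_1, \ldots, D^{0,d}_{i-1}$). The only thing to double-check is that the surviving deepest point of $P'$ really has depth exactly $i$, so that Lemma~\ref{lm:maxdepth} can be applied sharply at the value $i$; this is precisely what Proposition~\ref{prop:removingpoints} guarantees.
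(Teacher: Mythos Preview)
Your proposal is correct and follows essentially the same approach as the paper: trim the histogram to $[D^{0,d}_1,\ldots,D^{0,d}_{i-1},1]$ via Proposition~\ref{prop:removingpoints}, then apply Lemma~\ref{lm:maxdepth} to the single point of depth $i$. The paper phrases the final step as a proof by contradiction while you argue directly, but the substance is identical.
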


\begin{proof}
For the sake of contradiction, let us assume that there exists a depth histogram $D^{0,d}$ for which the statement is not true. This means, there is a nonzero entry $D^{0,d}_i$ and $\sum_{j=1}^{i-1} D^{0,d}_j < 2i + d - 3$.

Using Proposition \ref{prop:removingpoints}, we can cut off the depth histogram $D^{0,d}$ at any point and so we can consider the histogram $D':= [D^{0,d}_1,\ldots, D^{0,d}_{i-1},1]$. We can now bound from above the number of points in the point set $P'$ corresponding to this histogram. From the assumption above we know that there are fewer than $2i + d - 3$ points of depth at most $i$ and by definition there is only one point of depth $i$. Thus, we have $|P'| < 2i + d - 2$. This however contradicts our observation about the maximum possible depth of any point. By Lemma \ref{lm:maxdepth}, we know that any point in $P'$ has depth less than $\frac{(2i + d - 2) - d + 2}{2} = i$. This contradicts the fact that we have a point of depth $i$, that is, $D'_i = 1$; and altogether this proves the corollary.
\end{proof}

From this, we can get a necessary condition for the general case:

\begin{corollary}
For any depth histogram $D^{k,d}$ and all nonzero entries $D^{k,d}_i$ with $i \geq 2$ we have \[\sum_{j=1}^{i-1} D^{0,d}_j \geq 2i + d + k - 3.\] 
In other words, if a point set $P$ in $\mathbb{R}^d$ spans a $k$-face $F$ with $\text{atd}_P(F)=i$, then $P$ contains at least $2i+d+k-3$ points of smaller depth.
\end{corollary}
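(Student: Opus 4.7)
My plan is to extend the contradiction argument of Corollary \ref{cor:necessary} to general $k$-flats by generalizing its two main ingredients. The first is a $k$-flat version of Lemma \ref{lm:maxdepth}: if $P\subseteq\R^d$ in general position spans a $k$-flat $F$ with $\emph{atd}_P(F)=i$, then $|P|\geq 2i+d+k-2$. The proof mimics Lemma \ref{lm:maxdepth} but rotates around $F$ rather than around a single point. Fix a witness halfspace $h_F$ with bounding hyperplane $h\supseteq F$; by general position $h\cap P$ consists exactly of the $k+1$ spanning points of $F$, and $h_F$ contains $i-k-1$ further points of $P$ in its strict interior. The hyperplanes through $F$ form a $(d-k-1)$-dimensional family, giving that many independent rotation directions. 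In each direction, the first point of $P$ to reach $h$ must lie in the strict exterior of $h_F$: otherwise, continuing the rotation past it would produce a halfspace still containing $F$ with only $i-1$ points of $P$, contradicting $\emph{atd}_P(F)=i$. After $d-k-1$ such rotations, $d$ points of $P$ lie on the final hyperplane, and a symmetric application of the opposite-halfspace argument forces the strict exterior to be large enough to yield the claimed count.

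The second ingredient is a $k$-flat analogue of Proposition \ref{prop:removingpoints}: any $q\in P\setminus F$ with $\emph{td}_P(q)\geq i$ may be removed without altering $\emph{atd}_P(F)$ or the depth of any point of $P$ of depth less than $i$. The latter follows immediately from Lemma \ref{lm:removehighdepth}. For the former, I would argue that $q$ lies in no witness halfspace for $F$: if $h'_F$ were such a witness containing $q$, then by general position $q$ lies strictly inside $h'_F$ (the boundary holds only the $k+1$ spanning points of $F$), and translating $h'_F$ slightly into its interior yields a closed halfspace that contains $q$, excludes $F$, and contains only the $i-k-1$ strict-interior points of $h'_F$, forcing $\emph{td}_P(q)\leq i-k-1<i$, a contradiction.

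Combining the two ingredients as in the proof of Corollary \ref{cor:necessary}: assume for contradiction that $\sum_{j=1}^{i-1} D^{0,d}_j < 2i+d+k-3$. Repeatedly deleting points of depth at least $i$ lying outside $F$ produces a subset $P'\supseteq F$ with $\emph{atd}_{P'}(F)=i$ and $|P'|\leq \sum_{j=1}^{i-1} D^{0,d}_j + (k+1) < 2i+d+k-2$, contradicting the generalized Lemma. The main technical obstacle I foresee is the symmetric opposite-halfspace step in the generalized Lemma: the compact assertion ``both halfspaces still need to contain at least $k-1$ points'' from the proof of Lemma \ref{lm:maxdepth} has to be recast with a whole $k$-flat $F$ as the anchor rather than a single point, and extracting the tight count around this higher-dimensional anchor is the delicate part to execute carefully.
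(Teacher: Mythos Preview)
Your route differs from the paper's. The paper does not redo the rotation argument of Lemma~\ref{lm:maxdepth} around a higher-dimensional anchor at all; instead it takes the orthogonal projection $\pi$ of $P$ onto $F^{\perp}$, the orthogonal complement of the affine hull of $F$. Under $\pi$ the $k+1$ points spanning $F$ collapse to a single point $p_0$ with $\emph{td}_{P'}(p_0)=i$, and Corollary~\ref{cor:necessary} is applied to the projected configuration; the $k$ extra preimages of $p_0$ are added back at the end, and the inequality $\emph{td}_{P'}(\pi(p))\geq\emph{td}_P(p)$ transfers ``small depth in $P'$'' to ``small depth in $P$''. So the paper reduces to the $0$-flat case rather than generalizing it.

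Your direct generalization of Lemma~\ref{lm:maxdepth} has a real gap, precisely at the step you flagged as delicate. Rotating around $F$ gives only $d-k-1$ independent directions, so after the rotations the hyperplane carries $d$ points: the $k+1$ points of $F$ together with $d-k-1$ points absorbed from the exterior. To bound the opposite open halfspace you must keep $F$ inside the hyperplane (otherwise the closed halfspace no longer contains $F$ and $\emph{atd}_P(F)=i$ says nothing about it). Any further tilt that keeps $F$ fixed can therefore push off at most the $d-k-1$ non-$F$ points, leaving $k+1$ on the boundary; the opposite closed halfspace then has at least $i$ points with $k+1$ of them on the boundary, so its open part has only $\geq i-k-1$ points, not $\geq i-1$. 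The count you actually obtain is $|P|\geq d+2(i-k-1)=2i+d-2k-2$, which is weaker than the $2i+d+k-2$ you claim by $3k$. (Concretely: for a diagonal of a convex quadrilateral in $\R^2$ one has $k=1$, $i=3$, $|P|=4$, whereas your lemma would demand $|P|\geq 7$.) There is also an arithmetic slip in the final step: under your contradiction hypothesis you get $|P'|\leq \sum_{j<i}D^{0,d}_j+(k+1)<(2i+d+k-3)+(k+1)=2i+d+2k-2$, not $2i+d+k-2$. The paper's projection trick sidesteps the first issue by shrinking the anchor back to a single point before any rotation is performed.
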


\begin{proof}
Let $F^{\perp}$ be the orthogonal complement of the affine hull of $F$, and let $\pi$ be the orthogonal projection from $\mathbb{R}^d$ to $F^{\perp}$.
Then $\pi$ maps $P$ to a point set $P'$ in $F^{\perp}$, where the $k+1$ points that span $F$ get mapped to the same point $p_0$.
In $F^{\perp}$, we have $\emph{td}_{P'}(p_0)=i$, so by Corollary \ref{cor:necessary}, $P'$ has at least $2i+d-3$ points of smaller depth.
As $k+1$ points were mapped to $p_0$ and for every point $p$ we have $\emph{td}_{P'}(p)\geq \emph{td}_{P}(p)$, the result follows.
\end{proof}

\subsection{Two special configurations}
\label{sec:symmetric}

Before proving that the condition is sufficient, we make a small detour and revisit Lemma \ref{lm:maxdepth} about the maximum possible depth. It is worth noting that the bound given in the theorem is tight.
We will show this in some detail using point sets in so-called \emph{symmetric configuration} \cite{Ruiz-Vargas2017}, as these point sets will also be at the core of our proof that the condition of Theorem \ref{th:defhisto} is sufficient.

\begin{definition}\label{def:symwheel}
A point set $P \subseteq \R^d$ in general position is in 
\begin{enumerate}
\item \emph{symmetric configuration} if and only if there exists a \emph{central point $c \in P$} such that every hyperplane through $c$ and $d-1$ other points of $P$ separates the remaining points into two halves of equal size.
\item \emph{eccentric configuration} if and only if there exists a \emph{central point $c \in P$} such that every hyperplane through $c$ and $d-1$ other points of $P$ \emph{almost} separates the remaining points into two halves of equal size, that is, divides the remaining points in two sets with difference in cardinality of at most 1.
\end{enumerate}
\end{definition}

Note that depending on the dimension and the size of $P$, only one of the definitions can be applied. We call the point sets in symmetric (or eccentric configuration, respectively), also \emph{symmetric point sets} for short (\emph{eccentric point sets}, respectively). Examples of such point sets are given in Figure \ref{fig:symmetricpointset}. 
\begin{figure}[h]
\centering
	\begin{subfigure}{.49\linewidth}
		\centering
		\includegraphics[scale = 1]{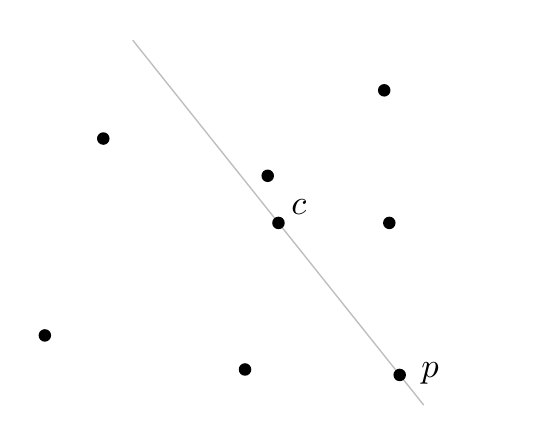}
	\begin{minipage}{.1cm}
	\vfill
	\end{minipage}
        \end{subfigure}
        \hfill
	\begin{subfigure}{.49\linewidth}
		\centering
		\includegraphics[scale = 1]{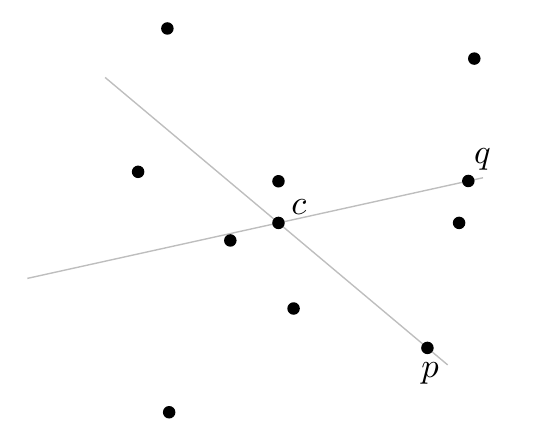}
	\begin{minipage}{.1cm}
	\vfill
	\end{minipage}		
	\end{subfigure}
	\caption{Two point sets in symmetric and eccentric configuration, respectively. The lines through $c$ and $p$ or $c$ and $q$, respectively, (almost) divide the remaining point set.}
	\label{fig:symmetricpointset}
\end{figure}

\begin{lemma}\label{lm:depthsymwheel}
The symmetric central point $c$ in a symmetric (or eccentric) point set $P$ has depth $\emph{td}_P(c) = \lfloor \frac{n-d+2}{2} \rfloor$.
\end{lemma}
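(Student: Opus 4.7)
The plan is to prove matching bounds $\emph{td}_P(c) \leq \lfloor (n-d+2)/2 \rfloor$ and $\emph{td}_P(c) \geq \lfloor (n-d+2)/2 \rfloor$. The upper bound is immediate from Lemma~\ref{lm:maxdepth} applied to $c \in P$, which gives $\emph{td}_P(c) \leq (n-d+2)/2$; integrality of the depth then tightens this to $\lfloor (n-d+2)/2 \rfloor$.

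For the lower bound, the goal is to show that every closed halfspace $H$ with $c \in H$ satisfies $|H \cap P| \geq \lfloor (n-d+2)/2 \rfloor$. Translating the bounding hyperplane $h$, I may assume $c \in h$; then an infinitesimal rotation of $h$ about $c$ can push any other $P$-points of $h$ out of $H$, which only shrinks $|H \cap P|$, so it suffices to treat generic hyperplanes $h$ through $c$ with $h \cap P = \{c\}$. For such $h$, one needs the smaller of the two open sides $S_+, S_-$ of $h$ to contain at least $\lfloor (n-d)/2 \rfloor$ points of $P$.

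For this last step I would use a spherical arrangement. Identify oriented hyperplanes through $c$ with unit normals $\vec{n} \in S^{d-1}$, and for each $p \in P \setminus \{c\}$ let $G_p = \{\vec{n} \in S^{d-1} : \langle \vec{n}, p-c\rangle = 0\}$. General position of $P$ makes $\mathcal{A} = \{G_p\}_{p \in P \setminus \{c\}}$ a general-position arrangement on $S^{d-1}$, whose vertices correspond to oriented hyperplanes through $c$ and exactly $d-1$ other points of $P$. On each open cell of $\mathcal{A}$ the count $|S_+|(\vec{n})$ is constant, and by compactness of $S^{d-1}$ every open cell $C$ has a vertex $v \in \overline{C}$ (in the only non-trivial range $n \geq d$). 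At such a vertex $v$, the associated hyperplane $h_v$ passes through $c$ and $d-1$ other points, so the symmetric (resp.\ eccentric) property forces the remaining $n-d$ points of $P$ to split across $h_v$ into sets $A, B$ with $||A|-|B|| \leq 1$, whence $|A|, |B| \geq \lfloor (n-d)/2 \rfloor$. For $\vec{n}_C \in C$ sufficiently close to $v$, the $n-d$ off-$h_v$ points retain the sides they had at $v$ and contribute $|A|$ or $|B|$ (depending on orientation) to $|S_+|(\vec{n}_C)$, while the $d-1$ on-$h_v$ points contribute a non-negative integer. Either way $|S_+|(\vec{n}_C) \geq \lfloor (n-d)/2 \rfloor$, which yields the lower bound.

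The main obstacle is the orientation bookkeeping at a vertex of the arrangement: verifying that, no matter which adjacent cell $C$ and which orientation of $\vec{n}_C$ one chooses, the contribution of the $n-d$ off-hyperplane points to $|S_+|(\vec{n}_C)$ is at least $\min(|A|, |B|) = \lfloor (n-d)/2 \rfloor$. The rest reduces to a standard general-position argument on $S^{d-1}$.
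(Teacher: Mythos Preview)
Your proposal is correct and follows essentially the same approach as the paper. Both proofs reduce to hyperplanes through $c$ and $d-1$ other points of $P$, then invoke the defining property of symmetric/eccentric configurations to conclude that either open side contains at least $\lfloor (n-d)/2 \rfloor$ points. The paper compresses your arrangement argument into a single sentence (``it is enough to consider hyperplanes having $d$ points on the boundary \ldots\ as rotating then ensures halfspaces in the desired form''), and obtains the upper bound by exhibiting a witness halfspace directly rather than citing Lemma~\ref{lm:maxdepth}, but the underlying idea is identical. The ``orientation bookkeeping'' you flag as the main obstacle is not actually a difficulty: whichever of $A$ or $B$ lands on the positive side, both satisfy $|A|,|B|\geq\lfloor (n-d)/2\rfloor$, so the bound holds regardless.
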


\begin{proof}
Note that it is enough to consider hyperplanes having $d$ points on the boundary to completely determine the depth of any point in the point set, as rotating then ensures halfspaces in the desired form. Since we are only interested in the depth of the symmetric central point, we only consider hyperplanes having $c$ on the boundary. 

If we have a point set in symmetric configuration, then by definition, the hyperplanes divide the point set into two sets of the same size. As there are $d$ points on the hyperplane, there are $\frac{n-d}{2}$ strictly on either side. Hence, slightly rotating gives us a halfspace containing $c$ and $\frac{n-d}{2}$ points of $P$ in total and therefore we have $\emph{td}_P(c) = \frac{n-d+2}{2}$ as claimed.

For the case of having an eccentric point set, the hyperplanes under consideration contain $c$ and $d-1$ other points of $P$ and they divide the remaining points into two sets of almost the same size. In particulare we find $\lfloor \frac{n-d}{2} \rfloor$ on one side of the hyperplane and $\lceil \frac{n-d}{2} \rceil$ on the other side. Thus, slightly rotating gives a closed halfspace containing $c$ and $\lfloor \frac{n-d}{2} \rfloor$ other points and therefore we have $\emph{td}_P(c) = \frac{n-d+1}{2}$ as claimed.
\end{proof}

At first glance, it is not clear that symmetric and eccentric point sets of any size exist in any dimension.
We will show that they do in the next section, this will be an important step in proving that the condition of Theorem \ref{th:defhisto} is sufficient.


\section{The condition is sufficient}
\label{sec:sufficient}

To prove that the condition we gave in Theorem \ref{th:defhisto} is sufficient, we build up point sets according to their histograms by adding points one-by-one. In other words, given a histogram, we start from the points in convex position (as many as there are of depth $1$). We then add new points at places, where they have the maximal possible depth, that is, we will add them in the ``center'' of the point set. We then push them outwards until they have the right depth, without changing the depth of any other point. In this way we successively add all points of depth 2, then the ones of depth 3 and so on. 

In this chapter we thus show what happens to the histogram when pushing points outwards (Section \ref{sec:moving}) and where to add new points and in which direction we push them (Section \ref{sec:inserting}). We first show, that we can indeed move deepest points outwards such that they can reach all required depths. In particular, this movement is possible without changing the depth of any point except the moved one. 

\subsection{Moving points}
\label{sec:moving}

First, we make an easy observation that is key to see how moving points affects the Tukey depth histogram of a point set. 

\begin{observation}\label{obs:ordertype}
The depth of a point $q \in P$ can only change if the order type of the point set changes. 
\end{observation}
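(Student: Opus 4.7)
The plan is to argue the contrapositive: if the order type is preserved, then the depth of every point is preserved. The key intermediate step is the standard reduction showing that the Tukey depth of $q\in P$ is attained by a closed halfspace whose bounding hyperplane is spanned by $d$ points of $P$ (one of which is $q$). This is a purely combinatorial quantity read off from the order type, so no actual coordinates are needed to compute it.

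First I would set up the reduction. Take any closed halfspace $h$ witnessing $\text{td}_P(q)$. If $q$ lies in the interior of $h$, translate the bounding hyperplane toward $q$ until it contains $q$; this only removes points from $h$, so the new halfspace contains at most as many points of $P$. Now rotate the hyperplane around affine axes through $q$: each time a point of $P$ first touches the hyperplane, stop and continue rotating around the enlarged axis. In general position this process terminates after $d-1$ such rotations, at which point the bounding hyperplane $H$ is spanned by $q$ together with $d-1$ other points of $P$, and the resulting closed halfspace still contains at most $\text{td}_P(q)$ points of $P$. Hence the depth equals the minimum, over the finitely many hyperplanes $H$ spanned by $q$ and $d-1$ other points of $P$, of the number of points of $P$ in the smaller closed side of $H$.

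Second I would observe that each term in this finite minimum is an order-type invariant. For a fixed such hyperplane $H$, which point of $P\setminus H$ lies on which open side of $H$ is determined by the sign of the orientation of the $(d+1)$-tuple consisting of the $d$ spanning points together with the queried point. These signs are exactly the data recorded by the order type. Consequently, the cardinality of the closed halfspace bounded by $H$ on each side depends only on the order type, and therefore so does the minimum, i.e., $\text{td}_P(q)$.

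Combining the two steps, any continuous (or even discrete) motion of $P$ that preserves all orientations of $(d+1)$-tuples preserves $\text{td}_P(q)$ for every $q\in P$; equivalently, if $\text{td}_P(q)$ changes, some orientation must flip and the order type must change. The only point requiring care is the sweeping argument in general position, but since we only move one point at a time later in the paper and choose motions that avoid degenerate intermediate configurations, this causes no trouble.
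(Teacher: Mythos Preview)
The paper gives no proof of this observation at all; it is stated as self-evident and immediately followed by the finer analysis in Proposition~\ref{lm:facechange}. Your argument is the standard way to justify it: reduce a witnessing halfspace by translation and rotation to one whose boundary is spanned by $q$ and $d-1$ further points of $P$, and then note that the side of each remaining point relative to such a hyperplane is exactly the sign of the corresponding $(d{+}1)$-tuple, hence an order-type datum. This is correct and is precisely the reasoning the paper implicitly relies on (and makes explicit only locally, in the proof of Lemma~\ref{lm:maxdepth} and of Proposition~\ref{lm:facechange}). One small wording issue: in your second paragraph you should take the minimum over \emph{both} closed sides of each spanned hyperplane $H$ (equivalently the smaller side, as you wrote), and the rotation step deserves the usual care---rotate in a direction in which a point of the current closed halfspace reaches the boundary first, which is always possible since otherwise the count would drop below the minimum---but these are routine details and do not affect the validity of the argument.
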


Note that the Tukey depth of $q$ can only change if $q$ is involved in the change in the order type. In other words, $q$ was pushed over a hyperplane formed by $d$ other points of the point set. We now formally characterize what happens in any such case.

\begin{proposition}\label{lm:facechange}
Let $P \in \R^d$ be a point set and $q \in P$ be an arbitrary point. Let $q'$ be a point close to $q$, such that the order types of $P$ and $P' := P \setminus \{q\} \cup \{q'\}$ only differ in one simplex $\s$, that is, $\s := conv \{p_1, \ldots, p_d, q\}$ and $\s'$, respectively. Let $h$ be the hyperplane spanned by $p_1,\ldots,p_d$ and $\q$ the intersection of $h$ with the line $qq'$.
\begin{itemize}
\item If $\q \notin conv\{p_1,\ldots,p_d\}$, then $\emph{td}_P(q) = \emph{td}_{P'}(q')$, and
\item otherwise, if $\q \in conv\{p_1,\ldots,p_d\}$, then $|\emph{td}_P(q) - \emph{td}_{P'}(q')| \leq 1$. 
\end{itemize}
\end{proposition}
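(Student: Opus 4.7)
The plan is to analyze, for each $(d-1)$-subset $S\subset P\setminus\{q\}$, the tight halfspace count $m_S(q) := \min(|P\cap H^+|,|P\cap H^-|)$ where $H^\pm$ are the two closed halfspaces bounded by the hyperplane through $\{q\}\cup S$; since the Tukey depth is achieved by rotating slightly past such a tight halfspace (pushing the $d-1$ non-$q$ boundary points to one side), one has $\text{td}_P(q) = \min_S m_S(q) - (d-1)$. The side of any $p\in P\setminus(S\cup\{q\})$ relative to this hyperplane is controlled by the orientation of the simplex $\{q\}\cup S\cup\{p\}$, and by hypothesis the only orientation that flips during $q\to q'$ is that of $\{q,p_1,\ldots,p_d\}$. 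Hence $m_S(q)=m_S(q')$ for every $S$ other than $S_i:=\{p_1,\ldots,p_d\}\setminus\{p_i\}$, and for each such $S_i$ only the single point $p_i$ switches sides, so $|m_{S_i}(q)-m_{S_i}(q')|\le 1$; this already yields the second bullet.

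For the first bullet I would exploit that $q'$ is close to $q$ while $q,q'$ lie on opposite sides of $h$, so $\hat q$ is close to both. After perturbing $q,q'$ within their respective cells of the arrangement of hyperplanes spanned by $d$-tuples of $P\setminus\{q\}$ (which changes neither the depth nor the order type), I may assume $q,q'$ are arbitrarily close to $\hat q$. In this local regime every $H_{S_i}(q)$ is a small perturbation of $h$, so each point of $P\setminus\{q,p_1,\ldots,p_d\}$ sits on the same side of $H_{S_i}(q)$ as of $h$; writing $A^*$ and $B^*$ for the numbers of such points strictly above and below $h$, the only correction to the strict side counts comes from the contribution of $p_i$. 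The main obstacle is to determine $p_i$'s side: a direct computation using the barycentric expansion $\hat q = \sum_j\lambda_j p_j$ should show that for $q$ just above $h$ the two strict side counts of $H_{S_i}(q)$ are $A^* + [\lambda_i<0]$ and $B^* + [\lambda_i>0]$, and that moving to $q'$ interchanges the two Iverson brackets. Consequently
\[ m_{S_i}(q) = d + \min\bigl(A^*+[\lambda_i<0],\; B^*+[\lambda_i>0]\bigr),\]
and $m_{S_i}(q')$ is obtained by the same formula with the brackets swapped.

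In Case (i), $\hat q\notin\text{conv}\{p_1,\ldots,p_d\}$ forces some $\lambda_{i^*}$ to be negative, while $\sum_j\lambda_j=1$ forces some $\lambda_j$ to be positive. Therefore both values $d+\min(A^*+1,B^*)$ and $d+\min(A^*,B^*+1)$ appear among $\{m_{S_i}(q)\}_i$, giving $\min_i m_{S_i}(q) = d + \min(A^*,B^*)$; the swap interchanges which $i$ realises which value but preserves the multiset, so $\min_i m_{S_i}(q')$ equals the same quantity. Combined with $m_S(q)=m_S(q')$ for every $S\neq S_i$, this yields $\text{td}_P(q)=\text{td}_{P'}(q')$, establishing the first bullet.
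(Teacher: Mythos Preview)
Your proof is correct and follows essentially the same route as the paper's: both reduce the depth computation to the hyperplanes through $q$ and a $(d-1)$-subset of $P$, observe that only the $d$ hyperplanes through $S_i=\{p_1,\dots,p_d\}\setminus\{p_i\}$ can change, and that in each of them only the single point $p_i$ switches sides. Your barycentric criterion $\operatorname{sign}(\lambda_i)$ for the side of $p_i$ is exactly the paper's condition ``$\operatorname{aff}(S_i)$ separates $p_i$ from $\hat q$ in $h$'', and your observation that in Case~(i) both signs occur (so both values $\min(A^*+1,B^*)$ and $\min(A^*,B^*+1)$ appear, before and after the swap) is precisely the paper's ``hyperplanes of both forms exist''.

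The one genuine difference is presentational: where the paper asserts the separation criterion and the ``both forms $\Rightarrow$ equal depth'' implication in one sentence each, you supply an explicit limiting argument (perturbing $q,q'$ toward $\hat q$ so that $H_{S_i}(q)$ is a small tilt of $h$) and an explicit $\min$-formula. This makes your write-up longer but also more self-contained; the paper's version leaves both of these steps to the reader. One small remark: your perturbation of $q$ and $q'$ is only used to \emph{evaluate} the cell-constant quantities $m_{S_i}$, so you need not worry that the segment between the perturbed points still crosses only $h$.
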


\begin{proof}
For simplicity of notation, let us denote $p_1, \ldots, p_d$ as $\A$.
Let $h_1$ be any hyperplane spanned by $d-1$ points of $P$ and $q$ and $h_2$ the hyperplane spanned by the same $d-1$ points and $q'$ instead of $q$.
Note that by the assumption on the order types, $h_1$ and $h_2$ have the same points of $P$ ($P'$, resp.) above (below, resp.), with the only exception when all the $d-1$ points lie in $\A$.
Thus, the only hyperplanes we need to consider are the ones spanned by $d-1$ points of $\A$ and $q$ or $q'$, respectively.

Let us therefore denote by $h_1^i$ and $h_2^i$ the hyperplanes spanned by $\A \setminus p_i$ and $q$ and $q'$, respectively.
The only point that lies above $h_1^i$ and below $h_2^i$ (or vice versa) is $p_i$, again using the assumption on order types.
Consequently, if there exist hyperplanes of both forms, that is, some with $p_i$ above $h_1^i$ and some with $p_i$ below $h_1^i$, then the depth of $q$ in $P$ is the same as the depth of $q'$ in $P'$, and it changes by at most 1 otherwise.
Of which form the hyperplanes $h_1^i$ and $h_2^i$ are is determined by whether the affine hull of $\A \setminus p_i$ separates $p_i$ and $\q$ in $h$.
In particular, all hyperplanes are of the same form if and only if $\q \in conv\{p_1,\ldots,p_d\}$.
Thus, the depth of $q$ can only change in this case, and if so, then at most by 1.
\end{proof}

Note that not only do we know what happens to the depth of $q$ but whenever $q$ has the highest depth among all points, we also know that the depths of the other points do not change.
\begin{observation}\label{obs:highestpointmoved}
Whenever we have $\emph{td}_P(q) > \emph{td}_P(p)$ for all points $p$ in the point set, then $\emph{td}_P(p) = \emph{td}_{P'}(p)$.
\end{observation}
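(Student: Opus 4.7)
The plan is to exploit the combinatorial setup of Proposition~\ref{lm:facechange}: the order types of $P$ and $P'$ agree on every $(d+1)$-subset except for $\{p_1, \ldots, p_d, q\}$ vs.\ $\{p_1, \ldots, p_d, q'\}$. Equivalently, $h := \text{aff}(p_1, \ldots, p_d)$ is the unique hyperplane spanned by $d$ points of $P \setminus \{q\}$ that separates $q$ and $q'$; for every other such hyperplane, $q$ and $q'$ lie in the same closed halfspace. Since the Tukey depth of a fixed point $p \in P \setminus \{q\}$ is a combinatorial function of the orientations $\chi(p, z_1, \ldots, z_d)$ for $z_i \in P \setminus \{p\}$, I would case-split on whether $p \in \{p_1, \ldots, p_d\}$.

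In the easy case $p \notin \{p_1, \ldots, p_d\}$, the single exceptional $(d+1)$-subset does not involve $p$, so every orientation $\chi(p, z_1, \ldots, z_d)$ agrees under the bijection $q \leftrightarrow q'$ between $P$ and $P'$; hence $\emph{td}_P(p) = \emph{td}_{P'}(p)$ immediately, and the strict-maximum hypothesis is not needed.

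The remaining case $p = p_i$ requires both inequalities, and this is where the hypothesis bites. The pivotal step is a rotation argument ruling out $\partial H = h$ for any depth-witnessing halfspace $H$ of $\emph{td}_P(p_i)$: if $\partial H = h$, then rotating $h$ infinitesimally around the $(d-2)$-flat spanned by $p_i$ together with $d-2$ of the other $p_j$'s moves the remaining boundary point $p_k$ strictly outside in one of the two rotation directions, producing a closed halfspace through $p_i$ of cardinality $\emph{td}_P(p_i) - 1$ and contradicting minimality. With this in hand, any witness $H$ for $\emph{td}_P(p_i)$ must satisfy $q \notin H$ (else $\emph{td}_P(q) \leq \emph{td}_P(p_i)$, violating the hypothesis) and $\partial H \neq h$; hence $q$ and $q'$ lie on the same side of $\partial H$, $q' \notin H$, and the same $H$ gives $\emph{td}_{P'}(p_i) \leq |H \cap P| = \emph{td}_P(p_i)$. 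For the reverse inequality, pick a witness $H'$ in $P'$; the same rotation argument applied to $P'$ yields $\partial H' \neq h$. If $q' \in H'$, then $\emph{td}_{P'}(q') \leq |H' \cap P'|$ and Proposition~\ref{lm:facechange} combined with the hypothesis gives $\emph{td}_{P'}(q') \geq \emph{td}_P(q) - 1 \geq \emph{td}_P(p_i)$, bounding the count. If instead $q' \notin H'$, then $q' \notin \partial H'$, so $\partial H'$ is a hyperplane of $P$ too, and $q$ lies on the same side of it as $q'$; thus $H'$ doubles as a halfspace in $P$ containing $p_i$ with the same count.

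The main obstacle is the rotation argument: it is the geometric input that makes the strict-maximum hypothesis actually bite, by certifying that a witness for $p_i$ cannot have $h$ as its boundary and so is unaffected by $q$'s crossing of $h$. The rotation requires $d \geq 2$ for the $(d-2)$-flat to be well defined; for $d = 1$ the claim appears to fail in general (as seen already for five collinear points with $q$ the median), so $d \geq 2$ is implicit to the setting.
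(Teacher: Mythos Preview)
Your argument is correct but takes a genuinely different and considerably longer route than the paper. The paper's proof is a single sentence: apply Lemma~\ref{lm:removehighdepth} twice---once to remove $q$ from $P$ and once to remove $q'$ from $P'$---and chain the resulting equalities via $P\setminus\{q\}=P'\setminus\{q'\}$. You instead work directly inside the order-type setup of Proposition~\ref{lm:facechange}, splitting on whether $p\in\{p_1,\dots,p_d\}$ and, in the hard case $p=p_i$, using a rotation argument to rule out $\partial H=h$ for any depth witness of $p_i$. What your approach buys is that it makes explicit exactly where the strict-maximum hypothesis and the ambient assumption $d\ge 2$ are used: the paper's second invocation of Lemma~\ref{lm:removehighdepth} tacitly needs $\emph{td}_{P'}(q')\ge\emph{td}_{P'}(p)$, which is not verified there, and your $d=1$ counterexample shows this reinsertion step genuinely fails without the rotation freedom available only for $d\ge 2$. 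So your proof is more self-contained and a touch more rigorous, at the cost of being heavier; the paper's is much slicker but leaves the reinsertion direction implicit.
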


The observation is a direct corollary of Lemma \ref{lm:removehighdepth} by first removing $q$ and then reinserting $q'$. Thus, we now exactly know how the Tukey depth histogram behaves when moving points of large depths.

\subsection{Inserting a new point}
\label{sec:inserting}

We have already seen point sets, that contain a point of maximum possible depth. These special point sets will help us placing new points of large depth, which we then can push outwards. 

For this, let $P$ be a point set in general position and in symmetric (eccentric, respectively) configuration missing the symmetric central point. If we place a new point $p$ at the location of the (previously inexistent) symmetric central point, then by Lemmas \ref{lm:maxdepth} and \ref{lm:depthsymwheel}, we know that $p$ has the maximal possible depth. Now, we are able to push $p$ outwards until it has the desired depth and the resulting point set is in eccentric (symmetric, respectively) configuration. An example of what happens in dimension two can be found in Figure \ref{fig:pushingp1}. It is pretty easy to see that in dimension two, this always works.
\begin{figure}[h]
\centering
	\begin{subfigure}{.24\linewidth}
		\centering
		\includegraphics[scale = 1]{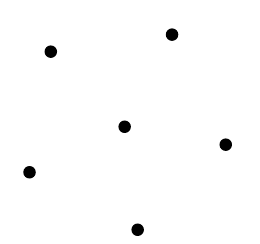}
	\begin{minipage}{.1cm}
	\vfill
	\end{minipage}
        \end{subfigure}
        \hfill
	\begin{subfigure}{.24\linewidth}
		\centering
		\includegraphics[scale = 1]{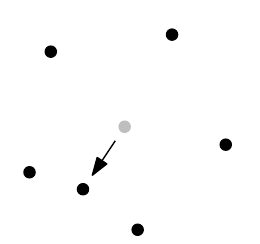}
	\begin{minipage}{.1cm}
	\vfill
	\end{minipage}
        \end{subfigure}
        \hfill
	\begin{subfigure}{.24\linewidth}
		\centering
		\includegraphics[scale = 1]{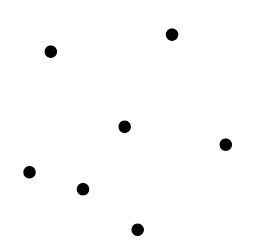}
	\begin{minipage}{.1cm}
	\vfill
	\end{minipage}
	\end{subfigure}
        \hfill
	\begin{subfigure}{.24\linewidth}
		\centering
		\includegraphics[scale = 1]{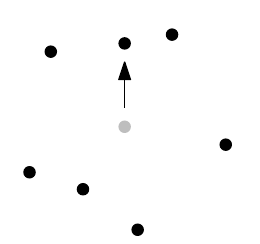}
	\begin{minipage}{.1cm}
	\vfill
	\end{minipage}
	\end{subfigure}	
	\caption{A point set in symmetric configuration (left). After pushing the symmetric central point out (second from left), we arrive at a point set in eccentric configuration missing the symmetric central point. Adding a new point at maximum possible depth (third from left). Pushing out again gets us back into a symmetric point set missing the symmetric central point (rightmost).}
	\label{fig:pushingp1}
\end{figure}

\begin{lemma}\label{lm:symeccdimtwo}
For any point set $P \subseteq \R^2$ in general position and in eccentric (symmetric, resp.) configuration there exists a direction in which we can push the central point such that after adding a new center we have a symmetric (eccentric, resp.) point set in general position. 
\end{lemma}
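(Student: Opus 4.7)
The plan is to push the central point $c$ by an infinitesimal amount in a well-chosen direction $\vec{v}$ to a new location $c'$, and to place the new central point $c''$ at the original position of $c$. The resulting point set is then $P \cup \{c'\}$ with $c'' = c$ as the new centre, and it remains to exhibit a direction $\vec{v}$ for which this new configuration satisfies the eccentric (respectively symmetric) axioms.

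For the symmetric-to-eccentric case $|P| = 2k$, every old line $cp_i$ splits the remaining $2k-2$ points into equal halves of size $k-1$. Regardless of which side of $cp_i$ the point $c'$ lands on, the new split becomes $(k-1, k)$, matching the eccentric condition automatically. The only remaining constraint is on the new line $cc'$, which must split the $2k-1$ old points $p_1, \ldots, p_{2k-1}$ into sizes $k-1$ and $k$. Since the count $L(\theta)$ of points to one side of the directed line through $c$ at angle $\theta$ satisfies $L(\theta) + L(\theta + \pi) = 2k-1$ and changes by $\pm 1$ as $\theta$ sweeps past each $\arg(p_i - c)$, a discrete intermediate-value argument produces some $\vec{v}$ with $L(\vec{v}) = k-1$; pushing $c$ by $\eps\vec{v}$ handles this case.

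For the eccentric-to-symmetric case $|P| = 2k+1$, $\vec{v}$ must satisfy two simultaneous conditions: $(a)$ $c'$ lies on the \emph{smaller} side of every line $cp_i$, so that each old $(k-1, k)$ split becomes the balanced $(k, k)$ after inserting $c'$; and $(b)$ the line $cc'$ splits $\{p_1, \ldots, p_{2k}\}$ into $(k, k)$. Label the $p_i$'s in cyclic angular order $\alpha_1 < \cdots < \alpha_{2k}$ around $c$ and set $f(i) = |\{j \neq i : \alpha_j \in (\alpha_i, \alpha_i + \pi)\}|$; the eccentric hypothesis gives $f(i) \in \{k-1, k\}$, and summing $f$ yields $|I_k| = |I_{k-1}| = k$ where $I_t := \{i : f(i) = t\}$. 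The structural heart of the argument is to prove that $I_k$ and $I_{k-1}$ are each single consecutive arcs in the cyclic angular order; this follows from the identity $f(i+1) - f(i) = -1 + |\{j : \alpha_j \in (\alpha_i + \pi, \alpha_{i+1} + \pi)\}|$ together with the fact that $f$ never leaves $\{k-1, k\}$, which rules out interleaved patterns.

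Once the arc structure $I_k = \{i_0+1, \ldots, i_0+k\}$, $I_{k-1} = \{i_0+k+1, \ldots, i_0+2k\}$ is in hand, condition $(a)$ unfolds to $\vec{v} \in (\alpha_{i_0+k} - \pi, \alpha_{i_0+1}) \cap (\alpha_{i_0+2k}, \alpha_{i_0+k+1} + \pi)$, and both factors are non-empty because the $I_k$- and $I_{k-1}$-arcs each span strictly less than $\pi$ of the circle. Any direction $\vec{v}$ in this intersection also separates $I_k$ from $I_{k-1}$ by the diameter through $c$, giving the $(k, k)$ split required in $(b)$; a generic perturbation preserves general position. The main obstacle is establishing the consecutive-arc claim for $I_k$ and $I_{k-1}$ and then carefully verifying that conditions $(a)$ and $(b)$ can be simultaneously met by a single direction.
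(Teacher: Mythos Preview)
Your eccentric-to-symmetric argument contains a genuine gap: the ``consecutive-arc'' claim for $I_k$ and $I_{k-1}$ is false for general eccentric configurations. Take $k=3$ and place the six non-central points at angles $10^\circ,20^\circ,130^\circ,140^\circ,250^\circ,260^\circ$ around $c$. One checks directly that every line $cp_i$ gives a $(2,3)$ split of the remaining five points, so the configuration is eccentric; yet $f=(3,2,3,2,3,2)$, so $I_3=\{1,3,5\}$ and $I_2=\{2,4,6\}$ are interleaved, not arcs. Your difference identity $f(i{+}1)-f(i)=-1+g_i$ is correct, but it only forces $g_i\in\{0,1,2\}$, and the pattern $g=(0,2,0,2,0,2)$ is perfectly consistent with $f$ staying in $\{k-1,k\}$. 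Worse, in this example the six half-plane constraints of your condition~(a) already have empty intersection (the constraints from $i=1,2,3$ alone are $(190^\circ,200^\circ)$ intersected with $(310^\circ,130^\circ)$, which is empty), so with the new centre placed at the old location no direction $\vec v$ works at all.

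The paper's own argument is much terser: it asserts that in an eccentric configuration there exist two neighbours in the rotational order around $c$ not separated by a halving line, and pushes $c$ opposite to that gap. Note that this assertion also fails on the example above (between every pair of angular neighbours there \emph{is} a halving direction), so the paper is implicitly relying on extra structure. The point is that in the paper's inductive construction one never meets an arbitrary eccentric set: each eccentric set is obtained from a symmetric one by pushing the centre out, and \emph{those} eccentric sets do have the single-defect / consecutive-arc structure (indeed, $f'(p_i)=k$ iff $\alpha_i$ lies in a fixed half-circle determined by $\vec v$). Your argument would go through if you add this hypothesis, but as written it purports to handle all eccentric configurations and does not.
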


\begin{proof}
First, note that if $P$ is in symmetric configuration, any direction does the job. The only ambiguity to be careful about is choosing a direction such that the resulting point set (after adding a new center) is again in general position. However, this is always possible. 

If $P$ is eccentric, then there exist two neighbors in the rotational order of points around $q$ without a symmetric central line dividing them. Let us denote these points as $p_1$ and $p_2$ and choose to move $q$ outwards on an ''opposite'' halfline, see Figure \ref{fig:pushingp}, right. This ensures that the point set is again nicely symmetric around the new center (i.e.~the line rotational order is alternating between points passing in front of, and behind $q$). In particular, every symmetric central line is halving the point set, thus the point set is symmetric. 
\end{proof}

\begin{figure}[h]
\centering
	\begin{subfigure}{.49\linewidth}
		\centering
		\includegraphics[scale = 1]{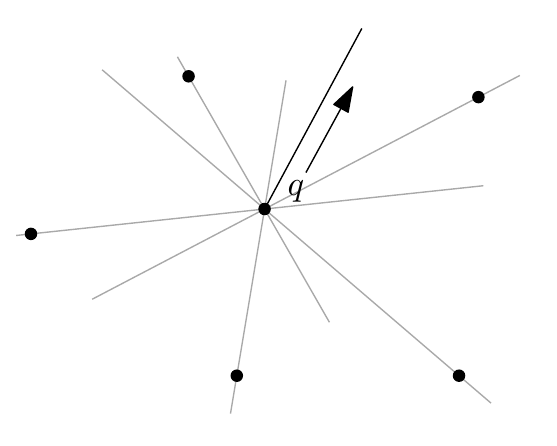}
	\begin{minipage}{.1cm}
	\vfill
	\end{minipage}
        \end{subfigure}
        \hfill
	\begin{subfigure}{.49\linewidth}
		\centering
		\includegraphics[scale = 1]{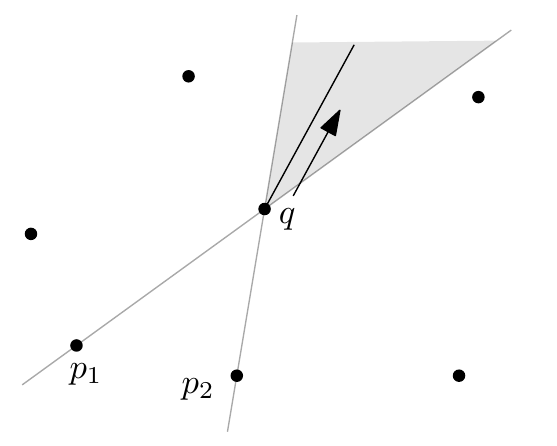}
	\begin{minipage}{.1cm}
	\vfill
	\end{minipage}		
	\end{subfigure}
	\caption{The central point and the direction in which we push it if the point set is symmetric (left) and if the point set is eccentric (right).}
	\label{fig:pushingp}
\end{figure}

In higher dimensions it is not so easy to see how to get the directions and why they always exist. We will do an induction argument on the dimension of the point set; to understand the necessary ideas we start in three-dimensional space. Let us further denote a point set as \emph{spherical}, if every point (except maybe one central point) lies on a sphere around the origin. We further extend the definition of symmetric (eccentric, resp.) point sets to spherical point sets, that is, they are symmetric (eccentric, resp.) with respect to the origin (instead of a symmetric central point).

\begin{proposition}\label{prop:spherical3dim}
For every spherical symmetric point set $P \subseteq \R^3$ in general position there exist two points $p_1, p_2 $ such that adding any one of them to $P$ results in a spherical eccentric point set and adding both, $p_1$ and $p_2$, results in a spherical symmetric point set in general position.
\end{proposition}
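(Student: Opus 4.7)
The plan is to reduce to the planar Lemma~\ref{lm:symeccdimtwo} by radially projecting $P$ onto a generic equator, applying the 2D result to find two directions there, and then lifting them back to $S^{2}$ as $p_1$ and $p_2$.

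Fix a unit vector $v \in S^{2}$ in generic position with respect to $P$: $v$ lies on no plane spanned by $O$ and two points of $P$, and the radial projection of the points of $P$ onto the equator $E_v$ perpendicular to $v$ produces pairwise distinct longitudes. Let $\widetilde{P} \subset E_v$ be this projection. A sweep of planes around each axis $Oq$ for $q \in P$, using the spherical symmetric condition that every plane through $O$ and two points of $P$ bisects the remaining $|P|-2$ points, shows that the plane $\text{span}(O,v,q)$ splits $P \setminus \{q\}$ with difference exactly one: the split of $P\setminus\{q\}$ is constant between consecutive crossings of the rotating plane, and at each crossing the symmetric condition fixes the even split of the remaining points. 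Reading this split on $E_v$ yields that $\widetilde{P}$ is in two-dimensional spherical eccentric configuration. By Lemma~\ref{lm:symeccdimtwo} (applied after adjoining the origin as an explicit central point to the 2D set), there is a direction $\alpha_1 \in E_v$ such that $\widetilde{P} \cup \{\alpha_1\}$ is spherical symmetric in $E_v$; equivalently, $\alpha_1$ lies on the short side of every line $O\widetilde{q}$ and the line $O\alpha_1$ bisects $\widetilde{P}$.

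Lift $\alpha_1$ to a point $p_1 \in S^{2}$ directly above $\alpha_1$ at a small positive latitude, so $p_1$ lies close to $v$. Now vary a second direction $\alpha_2$ in the same open cell as $\alpha_1$ in the arrangement of lines $\{O\widetilde{q}: q \in P\} \cup \{O\alpha_1\}$, and correspondingly lift $\alpha_2$ to $p_2 \in S^{2}$ at a small negative latitude; by a continuity argument along this cell, pick $\alpha_2$ so that the plane $\text{span}(O,p_1,p_2)$ exactly bisects $P$. Each singleton augmentation $P \cup \{p_i\}$ is then spherical eccentric: planes of the form $\text{span}(O,p_i,q)$ for $q \in P$ are small perturbations of $\text{span}(O,v,q)$ and thus split $P \setminus \{q\}$ with difference one, while planes spanned by $O$ and two points of $P$ retain the original even bisection with $p_i$ on one side. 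The double augmentation $P \cup \{p_1, p_2\}$ is spherical symmetric: planes $\text{span}(O,q,r)$ with $q, r \in P$ remain bisected because $p_1$ (near $v$) and $p_2$ (near $-v$) lie on opposite sides; planes $\text{span}(O,p_1,q)$ and $\text{span}(O,p_2,q)$ become bisected because $p_2$ (respectively $p_1$) sits on the short side of $\text{span}(O,v,q)$, as enforced by $\alpha_1, \alpha_2$ lying on the same short side of every line $O\widetilde{q}$; and the plane $\text{span}(O,p_1,p_2)$ bisects $P$ by the choice of $\alpha_2$.

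The main obstacle is pinning down $\alpha_2$ and the latitudes simultaneously so that all the required conditions hold: $\alpha_2$ must stay in the same short-side cell as $\alpha_1$ to preserve the 2D conditions, the latitudes must keep $p_1, p_2$ non-antipodal and $P \cup \{p_1, p_2\}$ in general position, and the plane $\text{span}(O,p_1,p_2)$ must bisect $P$ in $\R^{3}$. As $\alpha_2$ varies along the cell, the plane $\text{span}(O,p_1,p_2)$ rotates and sweeps over points of $P$, and its 3D split of $P$ changes by $\pm 1$ at each crossing; an intermediate-value argument along the cell should produce the desired exact bisection. Showing that such a value of $\alpha_2$ exists, and that the cell is wide enough to accommodate it without forcing degeneracies, is the technical heart of the argument.
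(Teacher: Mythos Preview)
Your strategy---reduce to a planar problem, apply Lemma~\ref{lm:symeccdimtwo}, and lift two points back near opposite poles---is essentially the paper's. The paper uses stereographic projection from the north pole rather than your equatorial (longitude) projection, but the reductions are equivalent: both turn planes through the axis into lines through a 2D centre and show the projected configuration is eccentric.

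The substantive difference is in how the last case is handled, and this is where your proposal has a genuine gap. The paper places one of the two points \emph{exactly} at the south pole $p_S$. Then the plane $\mathrm{span}(O,p_S,p_{N'})$ contains the whole axis and corresponds, in the 2D picture, to the line through the centre and the pushed-out point $q_{N'}$; this line halves the projected set because $Q^{NN'}$ is symmetric by construction. No intermediate-value search is needed. In your framework the same simplification is available: put $p_2$ at the exact pole $-v$ (equivalently, take $\alpha_2=\alpha_1$ and make the southern offset negligible compared to the northern one). Then $\mathrm{span}(O,p_1,p_2)$ is the meridian plane at longitude $\alpha_1$, and you already arranged that the line $O\alpha_1$ bisects $\widetilde P$. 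Your proposed IVT argument, which you yourself flag as the unresolved ``technical heart'', is both unjustified as written (with $p_2$ held near $-v$, varying $\alpha_2$ within a single cell barely moves $\mathrm{span}(O,p_1,p_2)$, so it need not cross any point of $P$ at all) and unnecessary once you make this choice.

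Two smaller points. First, a terminology slip: ``small positive latitude'' places $p_1$ near the equator, not ``close to $v$''; the rest of your argument clearly intends $p_1$ near the north pole and $p_2$ near the south pole, so you mean small colatitude. Second, for the planes $\mathrm{span}(O,p_1,q)$ with $q\in P$, the paper argues indirectly via a further orthogonal projection and a parity lemma for planar splits; your ``short-side'' argument (that the other added point lands on the deficient side of each such plane because $\alpha_1,\alpha_2$ lie on the short side of every line $O\widetilde q$) is a legitimate and arguably more direct alternative, but it should be computed out rather than asserted.
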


Here general position means that no three points of $P$ are on a common plane through the origin.
In particular, no two points are exactly opposite on the sphere. The reason we add two points at a time is that they heavily depend on one another.

\begin{proof}
Let us assume that $P$ is a spherical symmetric point set. Note that adding any point on the sphere to $P$ results in an eccentric point set by definition (that is clearly spherical). The harder part of the proof is to show that there exist two points such that adding both results in a symmetric point set again. For this, let us assume without loss of generality that there is neither a point at the south pole nor at the north pole of the sphere. Let $p_S$ be at the south pole of the sphere, $p_N$ at the north pole and define $P^S := P \cup p_S$, $P^N := P \cup p_N$ and $P^{SN} := P \cup p_S \cup p_N$. Note that $P^{SN}$ is symmetric; however, it is not in general position. Therefore we will slightly move $p_N$, such that we get a point set that is both, symmetric and in general position. In the following, we will construct this new point, denoted as $p_{N'}$.

For this let $Q \subseteq \R^2$ be the stereographic projection of $P$ from the north pole, meaning $Q$ is a point set in the plane where the projection of the north pole coincides with the origin of the plane. For all $p_i \in P$ we denote their projections as $q_i$. Further, let $q_N$ be at the origin and define $Q^N := Q \cup q_N$; in other words $q_N$ and $Q^N$ are the projections of $p_N$ and $P^N$, respectively.

We can now show that $Q^N$ is eccentric (with respect to $q_N$). Every line through $q_N$ and $q_i \in Q^N$ corresponds to a plane in $\R^3$ through $p_N$, $p_i$ and $p_S$. Therefore the plane goes through the origin and two points of $P^{N} \subseteq \R^3$ and is almost halving the point set $P^N$ (recall that $P^N$ is eccentric). Hence, the line is also almost halving the point set $Q^N$ (the points above the plane are on the same side of the line) and so the point set is indeed eccentric. 

We can now push $q_N \in Q^N$ slightly into some direction using Lemma \ref{lm:symeccdimtwo}. Let us denote the new point as $q_{N'}$ and its projection back into $\R^3$ as $p_{N'}$. Note that we not only get a direction from Lemma \ref{lm:symeccdimtwo}, but it ensures also that $Q^{NN'} := Q^N \cup q_{N'}$ is symmetric (with respect to $q_N$). 

Define $P^{SN'} := P^S \cup p_{N'}$ and note that it is in general position. It remains to show that $P^{SN'}$ is symmetric. For this let $h$ be any plane through the origin and two arbitrary points $p_1, p_2 \in P^{SN'}$. By definition, we need to show that $h$ is halving the point set $P^{SN'}$. 

If $p_1,p_2 \in P$ then it is easy to see that $h$ halves the point set $P$ and therefore also $P^{SN'}$ (we added one point above $h$, namely $p_{N'}$, and one point below $h$, that is $p_S$). 

If $p_1 = p_S$ and $p_2 \in P$ then the plane $h$ also goes through $p_N$. Thus $h$ corresponds to a line $l$ in $\R^2$ that goes through the origin ($q_N$) and $q_2 \in Q^{NN'}$. Remember that $Q^{NN'}$ is symmetric (with respect to $q_N$), therefore $l$ is halving $Q^{NN'}$ and consequently $h$ is halving the point set $P^{SN'}$. 

The last possibility is that $p_1 = p_{N'}$ and $p_2 \in P$ (and in particular $p_2 \neq p_S$). Here we give an indirect proof, so let us assume that $h$ does not half the point set $P^{SN'}$, i.e.~there are at least $k+2$ points on one side and at most $k$ on the other.
Consider the line $\ell \in \R^3$ through the origin and $p_2$, and let $\ell^{\perp}$ be a plane orthogonal to it.
Under the orthogonal projection $\pi$ to $\ell^{\perp}$, $P^{SN'}$ gets mapped to a point set of the same size.
In particular, each plane through the origin and $p_2$ gets mapped to a line through $\pi(p_2)$, and two points $a$ and $b$ are on different sides of the plane if and only if $\pi(a)$ and $\pi(b)$ are on different sides of the projected line.
By our assumption we thus know that the line through $\pi(p_2)$ and $\pi(p_{N'})$ splits the point set into unequal parts.
It is a well known property of planar point sets that then there must be some point $\pi(p_3)$ such that the line through $\pi(p_2)$ and $\pi(p_3)$ splits the point set in the same ratio.
In fact, for every unequal ratio, the number of lines through $\pi(p_2)$ that split the line in this ratio must be even, see e.g.~Lemma 3.5 (ii) in \cite{Ruiz-Vargas2017}.
But then the plane (in $\R^3$) through the origin, $p_2$ and $p_3$ does not half the point set, which is a contradiction.
\end{proof}

Note that this proof heavily relied on the facts that we have a spherical point set and that we can find the needed directions for point sets in $\R^2$. While the former condition can easily be avoided (see Lemma \ref{lm:general3dim} below), the latter can be ensured with doing an induction over the dimension. 

Formally, for a point set $P$ that is not spherical but in symmetric or eccentric configuration, let $S$ be a surrounding sphere of $P$ with center $q$ (the symmetric central point of $P$). For all points $p \in P$ such that $p \neq q$, we push $p$ out onto $S$ on the line $pq$ and denote the resulting point set as the \emph{induced spherical point set $P'$}. Note that $P'$ is clearly spherical and we additionally know the following.
\begin{observation}\label{obs:spherical}
The induced spherical point set $P'$ of $P$ is symmetric (eccentric, respectively) if and only if $P$ is symmetric (eccentric, respectively). 
\end{observation}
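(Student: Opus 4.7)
The plan is to exploit a simple geometric fact: pushing a point outward along a ray emanating from $q$ leaves it on the same side of any hyperplane that passes through $q$. Concretely, let $q$ be the central point of $P$ and for every $p\in P\setminus\{q\}$ let $p'$ denote the point obtained by extending the ray from $q$ through $p$ until it meets the sphere $S$. Then $p'$ lies strictly on the open ray from $q$ through $p$, and any hyperplane $h$ containing $q$ intersects this ray only at $q$. Hence $p$ and $p'$ lie in the same (strict) open halfspace bounded by $h$.

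First I would use this to show that the hyperplanes we need to check are literally the same for $P$ and $P'$. If $h$ is a hyperplane through $q$ and $d-1$ other points $p_1,\ldots,p_{d-1}\in P$, then since $p_i'$ is on the ray from $q$ through $p_i$ and $q\in h$, we get $p_i'\in h$ as well. Thus $h$ is also a hyperplane through $q$ and the $d-1$ points $p_1',\ldots,p_{d-1}'$ of $P'$. The converse direction is identical by reversing the construction. So the set of hyperplanes through $q$ and $d-1$ further points of $P$ coincides (as a set of hyperplanes) with the corresponding set for $P'$.

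Next, for such a hyperplane $h$, the remaining points of $P\setminus\{q,p_1,\ldots,p_{d-1}\}$ are in bijection with those of $P'\setminus\{q,p_1',\ldots,p_{d-1}'\}$ via $p\mapsto p'$, and by the observation above, each $p$ lies on the same side of $h$ as its image $p'$. Therefore the two resulting partitions of the remaining points have exactly the same cardinalities on each side of $h$. Consequently the halving condition defining a symmetric configuration holds for $P$ with central point $q$ if and only if it holds for $P'$ with central point $q$, and likewise for the almost-halving condition defining an eccentric configuration. This yields both implications of the observation.

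The only potential subtlety is a general position remark: one should check that $P'$ inherits general position from $P$, but this is automatic since the radial projection from $q$ is an injection of $P\setminus\{q\}$ into $S$ that preserves incidences of affine subspaces through $q$, so coplanarities among points of $P'$ together with $q$ would pull back to coplanarities among corresponding points of $P$ together with $q$. No deep step is needed; the entire argument reduces to the elementary fact that radial projection from $q$ preserves the two open halfspaces cut out by any hyperplane through $q$.
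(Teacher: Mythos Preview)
Your argument is correct and matches the paper's own justification, which is the single sentence that the relevant hyperplanes through $q$ remain unchanged and split $P$ and $P'$ identically; you have simply spelled out in detail why radial scaling from $q$ fixes every hyperplane through $q$ and hence preserves both the family of hyperplanes spanned by $q$ and $d-1$ other points and the side of $h$ on which each remaining point lies. One small remark: your general-position comment is fine once you note that for spherical point sets the paper uses the weaker notion ``no $d$ points lie on a common hyperplane through the origin'' (see the remark after Proposition~\ref{prop:spherical3dim}), which is exactly what your incidence-through-$q$ argument establishes.
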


This follows from the construction of $P'$ as all necessary hyperplanes remain unchanged and in particular they split the point sets $P$ and $P'$ in the exact same way. 

\begin{lemma}\label{lm:general3dim}
For every symmetric point set $P \subseteq \R^3$, there exist two directions $v_1$ and $v_2$ such that we can push the central point into either direction; add a new central point and arrive at an eccentric point set $P'$. We can then push the newly added point into the other direction, and arrive at a symmetric point set $P''$ missing the symmetric central point. 
\end{lemma}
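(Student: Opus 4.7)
The plan is to reduce the statement to Proposition \ref{prop:spherical3dim} via the induced spherical point set construction, with Observation \ref{obs:spherical} providing the translation back to the original (non-spherical) configurations.

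I would begin by letting $c$ be the symmetric central point of $P$, setting $P^\circ := P \setminus \{c\}$, and forming the induced spherical point set $\bar P$ of $P$ by projecting every point of $P^\circ$ radially onto a fixed sphere $S$ centered at $c$. By Observation \ref{obs:spherical}, $\bar P$ is spherical symmetric with respect to the origin $c$, so Proposition \ref{prop:spherical3dim} applies and produces two points $p_1, p_2 \in S$ such that $\bar P \cup \{p_i\}$ is spherical eccentric for $i=1,2$ and $\bar P \cup \{p_1, p_2\}$ is spherical symmetric in general position. The candidate directions are then $v_i := p_i - c$.

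The verification proceeds in two stages. First, push $c$ along direction $v_1$ to a position $c_1$ far outside the current configuration; by Observation \ref{obs:highestpointmoved}, this motion leaves the depths of all other points untouched. Add a new central point $c'$ at the original location of $c$. Because $c_1$ lies on the ray through $c' = c$ in direction $v_1$, the induced spherical point set of $P^\circ \cup \{c_1, c'\}$ centered at $c'$ coincides combinatorially with $\bar P \cup \{p_1\}$, which is spherical eccentric. Observation \ref{obs:spherical} then yields that $P^\circ \cup \{c_1, c'\}$ itself is eccentric with central point $c'$, which is our $P'$. Second, push $c'$ along direction $v_2$ to a new position $c'_2$, producing $P'' := P^\circ \cup \{c_1, c'_2\}$. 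If one were to add a hypothetical point at the original location of $c$, then the induced spherical point set centered there would be $\bar P \cup \{p_1, p_2\}$, which is spherical symmetric in general position by the choice of $p_1, p_2$. Hence $P''$ is a symmetric point set missing its symmetric central point, as required.

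The main obstacle I anticipate is ensuring that the induced spherical construction genuinely tracks the symmetric/eccentric type through all of the intermediate configurations. In particular, the centers of projection must line up across the two pushes so that Observation \ref{obs:spherical} is applicable at each step, which is why it is convenient to place $c'$ precisely at the original location of $c$ and to push along rays through that point. Preserving general position throughout follows from the corresponding clause in Proposition \ref{prop:spherical3dim} together with generic choices of the push magnitudes, and will need only a brief verification.
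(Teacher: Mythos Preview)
Your proposal is correct and follows essentially the same route as the paper: form the induced spherical point set, apply Proposition~\ref{prop:spherical3dim} to obtain two sphere points, take the directions $v_i$ as the rays from the center to those points, and use Observation~\ref{obs:spherical} to transfer the eccentric/symmetric property back to the non-spherical configurations. Your version is slightly more explicit about where to place the new central point and why the projection centers line up across the two pushes, but this is exactly the mechanism the paper relies on implicitly when it notes that the induced spherical point set is independent of how far one pushes.
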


\begin{proof}
The idea is to use Proposition \ref{prop:spherical3dim} on the induced spherical point set of $P$ to get directions in which we could push the points. Let $SP$ be the induced spherical point set of $P$. By Proposition \ref{prop:spherical3dim} there exist two positions $s_1$ and $s_2$ where we can add new points to $SP$ such that the resulting point sets are eccentric (symmetric, resp.). Define $v_1$ to be the directed line from the origin to $s_1$ and similarly, define $v_2$ as the directed line to $s_2$. 

Note that pushing the symmetric central point in $P$ in direction $v_1$ results in the same induced spherical point set, that is $SP \cup s_1$, independently of how far we push. Since the induced spherical point sets are symmetric (eccentric, respectively) the same is true for the point sets $P'$ and $P''$, using Observation \ref{obs:spherical}.
\end{proof}

Note that the exact same arguments work not only in $\R^3$ but in arbitrary dimensions. The plan is therefore to prove statements similar to Proposition \ref{prop:spherical3dim} and Lemma \ref{lm:general3dim} in $d$ dimensions. We will give an inductive proof over the dimension.

\begin{theorem}\label{thm:directions}
For every symmetric point set $P \subseteq \R^d$, there exist two directions $v_1$ and $v_2$ such that we can push the central point into either direction; add a new central point and arrive at an eccentric point set $P'$. We can then push the newly added point into the other direction, and arrive at a symmetric point set $P''$ missing the symmetric central point.
\end{theorem}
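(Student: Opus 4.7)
I would prove Theorem \ref{thm:directions} by induction on $d$, with base cases $d = 2$ and $d = 3$ already handled by Lemma \ref{lm:symeccdimtwo} and Lemma \ref{lm:general3dim} respectively. The inductive step mirrors the strategy taking Proposition \ref{prop:spherical3dim} to Lemma \ref{lm:general3dim}: first I would prove a dimensional analog of Proposition \ref{prop:spherical3dim}, then deduce the theorem via the induced spherical point set together with Observation \ref{obs:spherical}. In this lifting step, the two directions $v_1$ and $v_2$ are simply the rays from the center through the two special sphere points produced by the spherical statement; because pushing a point in a fixed radial direction does not change its induced spherical image apart from placing it on the corresponding point of the sphere, the eccentric/symmetric statuses for $P'$ and $P''$ transfer from their spherical counterparts.

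For the spherical analog I would show: for every spherical symmetric $P \subseteq \R^d$ in general position there exist $p_1, p_2$ on the enclosing sphere such that $P \cup \{p_i\}$ is spherical eccentric for each $i \in \{1,2\}$ and $P \cup \{p_1, p_2\}$ is spherical symmetric in general position. I would pick antipodal points $p_N, p_S$ on the sphere that avoid all hyperplanes spanned by the origin and $d-1$ points of $P$; then $P \cup \{p_N\}$ and $P \cup \{p_S\}$ are both eccentric, but $P \cup \{p_S, p_N\}$ is not in general position, since the origin, $p_N$, and $p_S$ are collinear. To perturb $p_N$ into a usable $p_{N'}$, I stereographically project $P \cup \{p_N\}$ from $p_S$ onto the tangent hyperplane at $p_N$, sending $P$ to a set $Q \subseteq \R^{d-1}$ and $p_N$ to the origin $q_N$ of that hyperplane. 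A hyperplane in $\R^{d-1}$ through $q_N$ and $d - 2$ points of $Q$ pulls back to a hyperplane in $\R^d$ through origin, $p_N, p_S$ and the $d - 2$ preimages, which almost halves the eccentric $P \cup \{p_N\}$; hence $Q^N := Q \cup \{q_N\}$ is eccentric in $\R^{d-1}$ with central $q_N$. The inductive hypothesis, which I would set up as a joint induction that also establishes the eccentric-starting-point analog of Lemma \ref{lm:symeccdimtwo} in $\R^{d-1}$, then yields a direction in which to adjoin $q_{N'}$ so that $Q^{NN'} := Q^N \cup \{q_{N'}\}$ is symmetric with central $q_N$; lifting $q_{N'}$ back to the sphere gives a point $p_{N'}$, and the candidate pair is $(p_S, p_{N'})$.

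It then remains to verify that $P^{SN'} := P \cup \{p_S, p_{N'}\}$ is symmetric. I consider an arbitrary hyperplane $h$ through the origin and $d - 1$ points of $P^{SN'}$, split into three cases. If all $d - 1$ points lie in $P$, then $h$ halves $P$ by symmetry of $P$, and a generic choice of perturbation places $p_S$ and $p_{N'}$ on opposite sides of $h$, so $h$ halves $P^{SN'}$. If one of the points is $p_S$, then $h$ contains the line through the origin and $p_S$, and therefore $p_N$ as well, so its stereographic image is a hyperplane in $\R^{d-1}$ through $q_N$ and $d - 2$ projected points of $Q$; this image halves $Q^{NN'}$ by construction, and so $h$ halves $P^{SN'}$. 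The main obstacle is the remaining case, in which one of the points is $p_{N'}$ and the other $d - 2$ are points $p_2, \ldots, p_{d-1} \in P$: assuming for contradiction that $h$ does not halve $P^{SN'}$, I would project $P^{SN'}$ orthogonally onto the two-dimensional orthogonal complement of the linear span of $\{p_2, \ldots, p_{d-1}\}$. Under this projection the origin and the $p_i$ collapse to a single image point $o^{\ast}$, and hyperplanes of $\R^d$ containing their linear span correspond bijectively to lines through $o^{\ast}$, so the line through $o^{\ast}$ and the image of $p_{N'}$ splits the planar image in the same unequal ratio. Lemma~3.5(ii) of \cite{Ruiz-Vargas2017} then yields another image point through which a line from $o^{\ast}$ produces the same unequal split, and pulling this line back to $\R^d$ gives a hyperplane through the origin and $d - 1$ points of $P$ that still fails to halve $P^{SN'}$, contradicting the first case. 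With symmetry of $P^{SN'}$ established, the spherical analog holds in $\R^d$, and the theorem then follows by the induced-point-set lifting described above.
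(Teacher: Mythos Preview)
Your proposal is correct and follows essentially the same route as the paper's proof: induction on $d$ with the two-step strategy of first establishing a spherical analog of Proposition~\ref{prop:spherical3dim} (south pole plus a perturbed north pole obtained via stereographic projection and the $(d-1)$-dimensional hypothesis) and then lifting to general symmetric sets via Observation~\ref{obs:spherical}, including the same three-case verification with the final case handled by projecting onto a $2$-plane and invoking Lemma~3.5(ii) of \cite{Ruiz-Vargas2017}. Your explicit remark that the induction must be set up jointly so that the hypothesis also supplies a direction when the projected $(d-1)$-dimensional set is \emph{eccentric} is a point the paper's sketch leaves implicit, and including $d=3$ as a separate base case is harmless since Lemma~\ref{lm:general3dim} already covers it.
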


\begin{proof}
As already mentioned, we do a proof by induction over the dimension $d$. For $d=2$, Lemma \ref{lm:symeccdimtwo} gives a proof (of an even stronger statement), therefore let us assume that $d \geq 3$ and the theorem is proven for dimension $d-1$ already. 

First, we can show that for every spherical symmetric point set $P \subseteq \R^d$ in general position there exist two points $p_1, p_2 $ such that adding any one of them to $P$ results in a spherical eccentric point set and adding both, $p_1$ and $p_2$, results in a spherical symmetric point set in general position. The points $p_1$ and $p_2$ will again be the south pole and very close to the north pole, respectively. As the arguments to prove this statement are exactly the same as the ones explained in detail in the proof of Proposition \ref{prop:spherical3dim}, we only sketch them here.

Adding any one of the points to $P$ results in an eccentric point set by definition. Similarly to the situation in $\R^3$ we can add the south pole and the north pole to $P$ and do a stereographic projection from the $d$-dimensional sphere to $(d-1)$-dimensional Euclidean space, that is, for a spherical eccentric point set in $\R^d$ we get an eccentric point set in $\R^{d-1}$. By the induction hypothesis there exists a direction in which we can push the central point (that corresponds to the north pole) and we get a point close to the north pole that can be added to $P$. One can again show that the resulting point set is symmetric. Note that in the very end, instead of a line $\ell$ we get a $(d-2)$-flat, but we still get a projection on a ($2$-dimensional) plane and so the reasoning is the same.

Lastly, following the exact reasoning of the proof of Lemma \ref{lm:general3dim} we can extend the result to all symmetric sets in $\R^d$.
\end{proof}

\subsection{Putting everything together}

Recall that we want to prove that the condition given in Theorem \ref{th:defhisto} is sufficient.

\begin{theorem}[Sufficient condition of Theorem \ref{th:defhisto}]
A vector $D^{0,d}$ satisfying $\sum_{j=1}^{i-1} D^{0,d}_j \geq 2i + d - 3$ for all nonzero entries $D^{0,d}_i$ with $i \geq 2$ is a depth histogram of a point set in general position in $\R^d$.
\end{theorem}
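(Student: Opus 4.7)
The plan is to construct a realization of $D^{0,d}$ iteratively, processing the entries $D^{0,d}_1, D^{0,d}_2, \ldots$ in order of increasing depth and adding the prescribed points one at a time. Throughout the construction I would maintain the invariant that, after every insertion, the current point set lies in symmetric or eccentric configuration missing the symmetric central point, in the sense of Definition \ref{def:symwheel} (with the choice of symmetric versus eccentric being forced by the parity of the current size relative to $d$). For the base case, the sub-histogram $[D^{0,d}_1]$ is realized by placing $D^{0,d}_1$ points on a sphere about the origin at generic angles, so that every hyperplane through the origin and $d-1$ of them (almost) halves the remaining points; this yields the required configuration whose missing central is the origin.

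For the inductive step, assume the current point set $P$ realizes the histogram through some partial stage while satisfying the invariant. To add one more point of depth $i$, I would first place a new point $p$ at the location of the missing central, so that $P\cup\{p\}$ is symmetric or eccentric with central $p$; by Lemma \ref{lm:depthsymwheel}, $p$ attains Tukey depth $\lfloor(|P|+1-d+2)/2\rfloor$. The histogram condition $\sum_{j=1}^{i-1}D^{0,d}_j\geq 2i+d-3$, together with the points of depth $i$ already inserted at earlier iterations, guarantees $|P|+1\geq 2i+d-2$, so this initial depth of $p$ is at least $i$. Next I would push $p$ outward along one of the directions $v_1, v_2$ supplied by Theorem \ref{thm:directions}: by Proposition \ref{lm:facechange} the depth of $p$ drops by at most one each time it crosses a hyperplane spanned by $d$ other points of $P$, and by Observation \ref{obs:highestpointmoved} the depths of all remaining points are unaffected throughout the motion. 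I would stop the push precisely at the moment $\emph{td}_{P\cup\{p\}}(p)=i$; by Theorem \ref{thm:directions} the resulting configuration is again symmetric or eccentric missing central, restoring the invariant for the next iteration. Iterating over all depths in the histogram yields the desired realization.

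The main obstacle I foresee lies in reconciling the two termination criteria of the push in the inductive step: $p$'s depth reaching the target value $i$ on the one hand, and the restoration of the symmetric or eccentric invariant on the other. Theorem \ref{thm:directions} prescribes a specific terminal configuration that is symmetric or eccentric missing central, whereas the proof wants to stop as soon as $p$'s depth equals $i$; making this coherent requires checking that along the push direction the intermediate configurations at each depth level attained by $p$ also satisfy the invariant, so the depth-$i$ intermediate state automatically has the right form. Establishing this depth-by-depth factorization of the push along $v_1$ (or $v_2$) is the essential technical point. Should it fail in some configuration, a natural fallback would be to insert all $D^{0,d}_i$ points of depth $i$ as a batch and restore the symmetric or eccentric structure only before advancing to depth $i+1$, relying on small order-type-preserving perturbations to realign the configuration without changing any Tukey depth.
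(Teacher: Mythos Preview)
Your overall plan coincides with the paper's: insert points one by one in order of increasing target depth, each time dropping the new point at the missing central of a symmetric/eccentric configuration and pushing it outward along a direction from Theorem~\ref{thm:directions}, using Proposition~\ref{lm:facechange} and Observation~\ref{obs:highestpointmoved} to control depths during the push. Two points, however, deserve attention.

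First, your base case has a gap. Placing $D^{0,d}_1$ points on a sphere ``at generic angles'' does \emph{not} in general yield a configuration in which every hyperplane through the origin and $d-1$ of the points (almost) halves the rest; this halving property is a strong constraint, not a genericity condition. The paper sidesteps this by starting from the vertices of a simplex---which is trivially symmetric---and then using the very same insertion-and-push mechanism to add \emph{all} points, including the depth-$1$ ones. In other words, the existence of symmetric/eccentric configurations of arbitrary size is itself proved inductively via Theorem~\ref{thm:directions}, not assumed at the outset.

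Second, the obstacle you flag---that stopping the push when $p$ reaches depth $i$ might not leave the configuration symmetric/eccentric---is real to raise but dissolves once you look at how Theorem~\ref{thm:directions} is proved. The push direction is radial from the current centre, so the \emph{induced spherical point set} (project every non-central point onto a sphere about the centre) is the same no matter how far along the ray $p$ has travelled; by Observation~\ref{obs:spherical} the symmetric/eccentric property depends only on this spherical projection. Hence every intermediate position of $p$ along the ray already satisfies the invariant, and you may stop exactly when the depth hits $i$. No batch-insertion fallback is needed. The paper also organises the insertions in pairs (the two directions $v_1,v_2$ of Theorem~\ref{thm:directions} take you symmetric $\to$ eccentric $\to$ symmetric), which is why it speaks of adding ``pairs of points''; your write-up should track this alternation explicitly.
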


\begin{proof}
If all entries of $D^{0,d}_i$ with $i \geq 2$ are zero, then let $P$ be a point set of $D^{0,d}_1$ many points in general, convex position in $\R^d$. This proves that the vector is a histogram. Let us therefore assume that there is at least one nonzero entry in $D^{0,d}_i$ with $i \geq 2$. Let $P$ be the vertices of a simplex in $\R^d$ around the origin and note that $P$ is a (spherical) symmetric point set. We will now add points to $P$ (in pairs) and maintain the condition that $P$ is a symmetric point set. We first add all points of depth one, then all points of depth two and so on. 

Assume that $P$ consists of $n$ points and assume further that there are points missing in $P$ (i.e.~$P$ does not have histogram $D^{0,d}_i$). Let us denote the smallest missing depth by $j$ and note that this means that all points in $P$ have depth at most $j$. We now add a point $p$ in the origin to $P$. Note that $p$ has depth $\lfloor \frac{(n+1)-d+2}{2} \rfloor$, see Lemma \ref{lm:depthsymwheel}. By the condition of the Theorem, we know that $n \geq 2j + d - 3$ and thus $j \leq \frac{n-d+3}{2}$. Therefore we can push $p$ outwards into a direction given by Theorem \ref{thm:directions}. Note that this is why we add ``pairs of points''. While pushing $p$ outwards, at some point the order type of the point set $P$ changes (recall that we added $p$ to $P$) and the depth of $p$ may change. We continue pushing $p$ until it has depth $j$. Proposition \ref{lm:facechange} and Observation \ref{obs:highestpointmoved} guarantee that the only depth that changed while moving $p$ is the one of point $p$, as all other points of the point set have lower depth. Theorem \ref{thm:directions} gives us not only the needed direction but also shows that we can maintain the property of having symmetric (and eccentric) point sets throughout the whole process. 
\end{proof}

\section{Number of depth histograms}
\label{sec:number}

The characterization of Tukey depth histograms $D^{0,d}(P)$ allows to compute the exact number of different histograms for point sets consisting of $n$ points in $\R^d$. 

\begin{definition}\label{def:number}
Let $D(n,d)$ denote the number of different Tukey depth histograms $D^{0,d}(P)$, for point sets $P \subseteq \R^d$ consisting of $n$ points. Further, let $D(n,d,l)$ denote the number of different histograms of points, for point sets in $\R^d$ consisting of $n$ points with the deepest point having depth $l$.
\end{definition}

Clearly we can sum up over all possible values of the deepest point to get the total number of different Tukey depth histograms, that is, using Lemma \ref{lm:maxdepth} we have
\begin{equation}
\label{eq:numbers}
D(n,d) = \sum_{j=1}^{\lfloor \frac{n-d+2}{2} \rfloor}{D(n,d,j)}.
\end{equation}
Note that for some values of $d$, $n$ and $l$ it is easy to see how to compute $D(n,d,l)$. Point sets in the plane consisting of three points can only have one histogram; that is, $D(3,2) = 1$. Similarly, when the deepest point has depth $1$ we also have exactly one possible histogram; i.e.~$D(n,2,1) = 1$. For arbitrary values of $d$, $n$ and $l$ we can prove the following formula.

\begin{theorem}\label{thm:number}
For any dimension $d \geq 2$, any $n \geq d+1$ and any suitable $l$, that is, $l \leq \frac{n-d+2}{2}$, we have
\begin{equation}
\label{eq:thmnumber}
D(n,d,l) = \frac{(n - 2l - d + 3)(n+l-d-1)!}{(l-1)!(n-d+1)!}
\end{equation}
\end{theorem}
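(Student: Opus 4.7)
The plan is to derive a combinatorial recursion for $D(n,d,l)$ in terms of the cumulative count $H(n,d,l) := \sum_{l'=1}^{l} D(n,d,l')$ (the number of depth histograms of $n$ points in $\R^d$ whose deepest point has depth at most $l$), and then prove the closed form by an induction on $l$ that carries an auxiliary closed form for $H$ along as an invariant.

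\textbf{Recursion.} I would condition on the value $a_l = k$. Since $a_l\geq 1$, the necessary condition of Theorem~\ref{th:defhisto} applied at $i=l$ forces $a_1+\cdots+a_{l-1}\geq 2l+d-3$, i.e.\ $k\leq n-2l-d+3$, while the prefix $(a_1,\dots,a_{l-1})$ is an arbitrary depth histogram of $n-k$ points of deepest depth at most $l-1$ (possibly with trailing zeros). Summing over $k\geq 1$ yields
\[
D(n,d,l) \;=\; \sum_{k=1}^{n-2l-d+3} H(n-k,\,d,\,l-1).
\]

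\textbf{Joint induction on $l$.} I would prove, in the range $n\geq 2l+d-2$, the pair of identities
\begin{align*}
D(n,d,l) &= \binom{n+l-d-1}{l-1} - 2\binom{n+l-d-1}{l-2},\\
H(n,d,l) &= \binom{n+l-d}{l-1} - 2\binom{n+l-d}{l-2}.
\end{align*}
The base $l=1$ is immediate (both sides equal $1$). For the inductive step, I substitute the inductive formula for $H(n-k,d,l-1)$ into the recursion, change variable $M:=n-k+l-d-1$ (so that $M$ ranges from $3l-4$ to $n+l-d-2$), and apply the hockey-stick identity $\sum_{M=r}^{R}\binom{M}{k}=\binom{R+1}{k+1}-\binom{r}{k+1}$ to each of the two binomial sums. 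The upper boundary produces exactly $\binom{n+l-d-1}{l-1}-2\binom{n+l-d-1}{l-2}$, while the lower boundary contribution $-\binom{3l-4}{l-1}+2\binom{3l-4}{l-2}$ vanishes thanks to the elementary identity $\binom{3l-4}{l-1}=2\binom{3l-4}{l-2}$, which follows from $(2l-2)/(l-1)=2$. Adding $D(n,d,l)$ to the inductive expression for $H(n,d,l-1)$ and collapsing via Pascal's rule twice then yields the formula for $H(n,d,l)$.

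\textbf{Matching the stated formula.} Pulling out the common factor $(n+l-d-1)!/((l-1)!(n-d+1)!)$ from the difference $\binom{n+l-d-1}{l-1}-2\binom{n+l-d-1}{l-2}$ leaves $(n-d+1)-2(l-1)=n-2l-d+3$, reproducing exactly~\eqref{eq:thmnumber}.

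The main obstacle is identifying the correct invariant to induct on: the factor of $2$ in the formula for $H$ is precisely what makes the lower boundary of the hockey-stick summation collapse via $\binom{3l-4}{l-1}=2\binom{3l-4}{l-2}$. Once this invariant is guessed, the remainder is routine bookkeeping with hockey-stick and Pascal's rule.
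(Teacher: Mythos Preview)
Your proof is correct, and the underlying route---an induction on $l$ via the recursion that peels off the top layer of the histogram---is the same one the paper takes: the paper's Lemma~\ref{lm:def} establishes exactly your recursion $D(n,d,l)=\sum_{k=1}^{n-2l-d+3}H(n-k,d,l-1)$ (after identifying $H(m,d,l-1)$ with $D(m+1,d,l-1)$), and the paper then inducts on $l$ as you do.

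Where you differ is in the algebra of the inductive step. The paper keeps everything in the factorial form of~\eqref{eq:thmnumber} and proves a bespoke summation identity (Lemma~\ref{lm:simplesum}) by a second induction on $n$. You instead rewrite $D$ and $H$ as differences of two binomials, which lets you apply the hockey-stick identity directly; the lower boundary term then cancels via $\binom{3l-4}{l-1}=2\binom{3l-4}{l-2}$, and carrying $H$ explicitly as an invariant closes the loop with two Pascal steps. Your execution is shorter and more transparent than the paper's factorial manipulations, at the cost of needing to guess the binomial-difference form of $H$ (and the factor~$2$) up front. Both arrive at the same place.
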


We will prove the Theorem inductively over $l$; however, to complete the proof, let us first give some of the characterizations and calculations that will be needed. 

\begin{lemma}\label{lm:def}
For any $d \geq 2$, any $l$ and $n \geq 2l + d - 2$ we have,
\begin{equation}
\label{eq:lmdeftext}
D(n,d,l) = \sum_{i=1}^{l}{D(n-1,d,i)} = \sum_{j=2l + d - 2}^{n}{D(j,d,l-1)}
\end{equation}
\end{lemma}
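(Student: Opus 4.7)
Plan: I would prove the two equalities separately, using the explicit characterization from Theorem \ref{th:defhisto} so that a histogram of $n$ points with maximum depth $l$ becomes a vector $(a_1,\ldots,a_l)$ with $a_l \geq 1$, $\sum_i a_i = n$, and $a_1 + \cdots + a_{i-1} \geq 2i + d - 3$ for every nonzero entry $a_i$ with $i \geq 2$. The hypothesis $n \geq 2l + d - 2$ is exactly what is needed to keep these constraints satisfiable under the operations below.

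For the first equality $D(n,d,l) = \sum_{i=1}^{l} D(n-1,d,i)$, I would exhibit an explicit bijection between histograms of $n$ points with maximum depth exactly $l$ and histograms of $n-1$ points with maximum depth at most $l$. The forward map simply decrements $a_l$: if $a_l \geq 2$, the image is $(a_1,\ldots,a_{l-1},a_l-1)$, still of maximum depth $l$; if $a_l = 1$, the image is $(a_1,\ldots,a_{l-1})$, whose maximum depth is the largest $i < l$ with $a_i \geq 1$. Validity of the image is automatic because the defining inequalities involve only prefix sums $a_1 + \cdots + a_{j-1}$, which are unchanged. The inverse takes a histogram of $n-1$ points of maximum depth $i \leq l$ and, for $i = l$, increments $a_l$, while for $i < l$ it appends zeros up to index $l{-}1$ and sets $a_l = 1$; in this latter case the validity condition at position $l$ reads $n - 1 \geq 2l + d - 3$, which is precisely the hypothesis $n \geq 2l + d - 2$. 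Checking that the two maps are mutual inverses is immediate from their definitions.

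For the second equality $D(n,d,l) = \sum_{j=2l+d-2}^{n} D(j,d,l-1)$, I would derive it from the first by telescoping. Applying the first identity once at depth $l$ and once at depth $l-1$ (the latter is legitimate since $n \geq 2l + d - 2 \geq 2(l-1) + d - 2$) and subtracting the two sums yields the recursion
\[ D(n,d,l) \;=\; D(n-1,d,l) + D(n,d,l-1). \]
Iterating this from $k = n$ down to the threshold $k = 2l + d - 2$ gives $D(n,d,l) - D(2l+d-3,d,l) = \sum_{k=2l+d-2}^{n} D(k,d,l-1)$. The boundary term vanishes: by Lemma \ref{lm:maxdepth}, any point in a set of $2l+d-3$ points has depth at most $\tfrac{(2l+d-3)-d+2}{2} = l - \tfrac12$, so no histogram of $2l+d-3$ points attains maximum depth $l$, i.e.\ $D(2l+d-3,d,l) = 0$.

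The main obstacle is the careful case analysis in the bijection of the first equality, in particular ensuring that reinserting a point of depth $l$ into a histogram whose previous maximum depth was strictly less than $l$ still yields a valid histogram; this is precisely where the assumption $n \geq 2l+d-2$ is used. Once the recursion is established, the telescoping step and the base-case verification via Lemma \ref{lm:maxdepth} are routine.
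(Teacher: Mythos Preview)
Your proposal is correct and follows essentially the same route as the paper: establish the first equality by removing one deepest point, then obtain the second equality by telescoping down to the threshold $2l+d-2$, where Lemma~\ref{lm:maxdepth} forces the boundary term to vanish. The only minor difference is that you justify the first equality combinatorially via the characterization of Theorem~\ref{th:defhisto} (an explicit bijection on histogram vectors, with the hypothesis $n\ge 2l+d-2$ used to validate the inverse map), whereas the paper argues it geometrically by actually removing a deepest point from a realizing point set and invoking Lemma~\ref{lm:removehighdepth}; your version is in fact a bit more explicit about why the map is a bijection.
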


\begin{proof}
The proof of this Lemma is essentially applying definitions and rearranging sums and does not need any involved ideas. First of all, observe that by removing a ``deepest'' point from a point set consisting of $n$ points we always get a point set consisting of $n-1$ points. Certainly the maximal depth does not increase when removing a point, and by Lemma \ref{lm:removehighdepth} no other depth changes; thus for any $d \geq 2$, any $l \geq 1$ and any $n \geq 2l + d - 2$ we have the following
\begin{equation}\label{eq:basicsum}
D(n,d,l) = \sum_{i=1}^{l}{D(n-1,d,i)}.
\end{equation}

Additionally note that whenever $n < 2l + d - 2$, we cannot have a point of depth $l$, see Lemma \ref{lm:maxdepth}. Therefore we can compute in the following way
\begin{align*}
D(n,d,l) &\stackrel{\eqref{eq:basicsum}}{=} \sum_{i=1}^{l}{D(n-1,d,i)} = \sum_{i=1}^{l-1}{D(n-1,d,i)} + D(n-1,d,l) \\
&\stackrel{\eqref{eq:basicsum}}{=} \sum_{i=1}^{l-1}{D(n-1,d,i)} + \sum_{i=1}^{l}{D(n-2,d,i)} \\
&=\sum_{i=1}^{l-1}{D(n-1,d,i)} + \sum_{i=1}^{l-1}{D(n-2,d,i)} + D(n-2,d,l) \\
&= \, \ldots \\
&= \sum_{i=1}^{l-1}{D(n-1,d,i)} + \ldots + \sum_{i=1}^{l-1}{D(2l+d-3,d,l)} + D(2l + d - 3,d,l) \\
&= \sum_{j=2l + d - 3}^{n-1}{\sum_{i=1}^{l-1}{D(j,d,i)}} + 0 \stackrel{\eqref{eq:basicsum}}{=} \sum_{j=2l + d - 3}^{n-1}{D(j+1,d,l-1)} \\
&=  \sum_{j=2l + d - 2}^{n}{D(j,d,l-1)},
\end{align*}
which concludes the proof of this Lemma.
\end{proof}

\begin{lemma}\label{lm:simplesum}
For any $d \geq 2$, any $k$ and $n \geq 2k + d - 2$ we have,
\begin{equation}
\label{eq:simplesum}
\sum_{j = 0}^{n - 2k - d + 2}{\frac{(j+3)(j+3k - 4)!}{(k-2)!(j+2k-1)!}} = \frac{(n-2k-d +3)(n+k-d-1)!}{(k-1)!(n-d+1)!}
\end{equation}
\end{lemma}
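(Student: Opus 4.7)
The plan is to prove this identity by induction on $n$ (with $k$ and $d$ fixed). It helps first to make the substitution $N := n - 2k - d + 2$; since $n + k - d - 1 = N + 3k - 3$, $n - d + 1 = N + 2k - 1$, and $n - 2k - d + 3 = N + 1$, the identity assumes the cleaner form
\[
\sum_{j=0}^{N} \frac{(j+3)(j+3k-4)!}{(k-2)!\,(j+2k-1)!} \;=\; \frac{(N+1)(N+3k-3)!}{(k-1)!\,(N+2k-1)!},
\]
which is a pure identity in $N$ and $k$ (the dimension $d$ has disappeared). I would induct on $N$.

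For the base case $N = 0$, the left-hand side is $\frac{3(3k-4)!}{(k-2)!\,(2k-1)!}$ while the right-hand side is $\frac{(3k-3)!}{(k-1)!\,(2k-1)!}$, and these coincide after writing $(3k-3)! = 3(k-1)(3k-4)!$ and cancelling the factor $(k-1)$ against $(k-1)! = (k-1)(k-2)!$.

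For the inductive step, I would add the $j = N+1$ summand $\frac{(N+4)(N+3k-3)!}{(k-2)!\,(N+2k)!}$ to the inductive hypothesis and verify that the result equals the right-hand side with $N$ replaced by $N+1$. After clearing the common factor $\frac{(N+3k-3)!}{(k-1)!\,(N+2k)!}$, the required identity reduces to the polynomial equation
\[
(N+1)(N+2k) + (N+4)(k-1) \;=\; (N+2)(N+3k-2),
\]
and both sides expand to $N^{2} + 3kN + 6k - 4$. The only mildly delicate aspect is the factorial bookkeeping; the underlying algebra collapses to a one-line polynomial identity. One could alternatively hunt for an antidifference $A(j)$ with $A(j+1)-A(j)$ equal to the summand and conclude by telescoping, or invoke a Wilf--Zeilberger routine, but for a sum this simple direct induction is already the shortest route.
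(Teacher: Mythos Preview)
Your proof is correct and follows essentially the same approach as the paper: both argue by induction on $n$ (your substitution $N = n - 2k - d + 2$ is just a change of variable that makes the bookkeeping cleaner), with the same base case and the same inductive step of adding the final summand and verifying a quadratic polynomial identity. The only difference is cosmetic---your substitution removes $d$ from the picture and shortens the algebra, but the underlying argument is identical.
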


\begin{proof}
We will prove this second Lemma by induction over $n$. For any $d \geq 2$ and any $k$, the base case is given by choosing $n = 2k + d - 2$. In this case the sum shrinks to one term and we have the following:
\begin{equation*}
\begin{split}
\sum_{j = 0}^{n - 2k - d + 2}{\frac{(j+3)(j+3k - 4)!}{(k-2)!(j+2k-1)!}} &= \sum_{j = 0}^{0}{\frac{(j+3)(j+3k - 4)!}{(k-2)!(j+2k-1)!}} = \frac{3 (3k - 4)!}{(k-2)! (2k-1)!}.
\end{split}
\end{equation*}
On the other hand, on the right side of equation \eqref{eq:simplesum}, we get the following:
\begin{align*}
\hspace{-5mm} \frac{(n-2k-d +3)(n+k-d-1)!}{(k-1)!(n-d+1)!} &= \frac{\big((2k + d - 2)-2k-d +3\big)\big((2k + d - 2)+k-d-1\big)!}{(k-1)!\big((2k+d-2)-d+1\big)!} \\
&= \frac{(3k-3)!}{(k-1)!(2k-1)!} = \frac{3 (k - 1) (3k-4)!}{(k-1)(k-2)!(2k-1)!} \\
&= \frac{3(3k-4)!}{(k-2)!(2k-1)!},
\end{align*}
which concludes the base case of the induction.

For the induction step we thus assume that for some $n \geq 2k + d - 2$ equality \eqref{eq:simplesum} is true and we want to prove the statement for the case $n + 1$. The first step is splitting the sum, that is,
\begin{align*}
&\sum_{j=0}^{(n+1) - 2k - d + 2}{\frac{(j+3)(j+3k - 4)!}{(k-2)!(j+2k-1)!}} = \sum_{j=0}^{n - 2k - d + 2}{\frac{(j+3)(j+3k - 4)!}{(k-2)!(j+2k-1)!}} + \, \ldots \\
&\hspace{33mm}\ldots \, + \frac{\big((n+1-2k-d+2)+3\big)\big((n+1-2k-d+2)+3k-4\big)!}{(k-2)!\big((n+1-2k-d+2)+2k-1\big)!}.
\end{align*}

Using the induction hypothesis, we now get that this is equal to 
\begin{align*}
\ldots \, &=  \frac{(n-2k-d +3)(n+k-d-1)!}{(k-1)!(n-d+1)!} + \frac{(n-2k-d+6)(n+k-d-1)!}{(k-2)!(n-d+2)!} \\
&= \frac{(n+k-d-1)!}{(k-1)!(n-d+2)!} \bigg( (n-2k-d+3)(n-d+2) + (n-2k-d+6)(k-1)\bigg) \\
&= \frac{(n+k-d-1)!}{(k-1)!(n-d+2)!} \Big( n^2-2nd +4n - nk +dk + 4k - 4d +d^2-2k^2\Big) \\
&= \frac{(n+k-d-1)!}{(k-1)!(n-d+2)!} \Big( (n-2k-d+4)(n+k-d) \Big) \\
&= \frac{\big((n+1)-2k-d+3\big)\big((n+1)+k-d-1\big)!}{(k-1)!\big((n+1)-d+1\big)!}
\end{align*}

This calculation proves the induction step and with this the Lemma is proven. 
\end{proof}

With these equations we are now able to prove Theorem \ref{thm:number}.

\begin{proof}[Proof of Theorem \ref{thm:number}]
We prove the theorem by induction over $l$. The base case is given by $l=1$ and we already observed that $D(n,d,1) = 1$ for $d \geq 2$ and $n \geq d+1$. The right hand side of equation \eqref{eq:thmnumber} simplifies in the following way
\begin{align*}
\frac{(n - 2l - d + 3)(n+l-d-1)!}{(l-1)!(n-d+1)!} = \frac{(n - d + 1)(n-d)!}{(0)!(n-d+1)!} = 1.
\end{align*}

Hence we have a base case for the induction, therefore let us assume that equation \eqref{eq:thmnumber} holds for suitable $d$ and $n$ and for $l \leq k-1$. We now show that it consequently also holds for $l=k$. We have the following
\begin{align*}
D(n,d,k) &\stackrel{\eqref{eq:lmdeftext}}{=} \sum_{j=2k + d - 2}^{n}{D(j,d,k-1)} \\ 
&\stackrel{I.H.}{=} \sum_{j=2k + d - 2}^{n}{\frac{\big(j - 2(k-1) - d + 3\big)\big(j+(k-1)-d-1\big)!}{\big((k-1)-1\big)!(j-d+1)!} } \\
&= \sum_{j=2k + d - 2}^{n}{\frac{(j - 2k - d + 5)(j+k-d-2)!}{(k-2)!(j-d+1)!} } \\
&= \sum_{j=0}^{n-2k-d+2}{\frac{\big((j+2k+d-2) - 2k - d + 5\big)\big((j+2k+d-2)+k-d-2\big)!}{(k-2)!\big((j+2k+d-2)-d+1\big)!} } \\
&= \sum_{j=0}^{n-2k-d+2}{\frac{(j+3)(j+3k-4)!}{(k-2)!(j+2k-1)!}} \\
&\stackrel{\eqref{eq:simplesum}}{=} \frac{(n-2k-d +3)(n+k-d-1)!}{(k-1)!(n-d+1)!}.
\end{align*}
Note that the assumptions of the theorem on $d$, $n$ and $l$ imply the conditions of the lemmas and so we are indeed always able to use inequalities \eqref{eq:lmdeftext} and \eqref{eq:simplesum}. This concludes the induction step and thus proves the theorem.
\end{proof}

With this we are now also able to compute the number of different histograms of point sets consisting of $n$ points. 
\begin{theorem}\label{thm:numbern}
For any dimension $d \geq 2$ and any $n \geq d+1$, we have
\begin{equation}
\label{eq:thmnumbern}
D(n,d) = \begin{cases}
\frac{2}{n-d+2} {{ 3 \frac{n-d}{2} + 1} \choose {\frac{n-d}{2}}}, &\text{ if $n-d$ is even and} \\
\frac{3}{n-d+2} {{ 3 \frac{n-d}{2} + \frac{1}{2} } \choose {\frac{n-d-1}{2}}}, &\text{ if $n-d$ is odd}
\end{cases}
\end{equation}
\end{theorem}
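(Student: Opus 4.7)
The plan is to substitute the explicit formula for $D(n,d,l)$ from Theorem \ref{thm:number} directly into \eqref{eq:numbers} and evaluate the resulting sum in closed form. After setting $m = n-d$ and reindexing by $k = l - 1$, the sum becomes
$$D(n,d) = \sum_{k=0}^{N} \frac{m-2k+1}{m+1}\binom{m+k}{k},$$
where $N = \lfloor m/2 \rfloor$; note that in both parities of $m$ we have $\lfloor (m+2)/2\rfloor - 1 = N$, so this single summation range covers both cases uniformly.

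The key observation is the algebraic identity
$$\frac{m-2k+1}{m+1}\binom{m+k}{k} = \binom{m+k}{k} - 2\binom{m+k}{k-1},$$
which follows immediately by factoring out $\frac{(m+k)!}{k!(m+1)!}$ from both sides. This puts the summand in exactly the right form to apply the hockey-stick identity to each of the two resulting partial sums, giving
$$D(n,d) = \binom{m+N+1}{N} - 2\binom{m+N+1}{N-1}.$$

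What remains is to collapse this two-term expression into the claimed single binomial, and this is where the parity split enters. For $m = 2N$ one has $m+N+1 = 3N+1$ and the ratio $\binom{3N+1}{N-1}/\binom{3N+1}{N} = N/(2N+2)$, so the expression simplifies to $\frac{1}{N+1}\binom{3N+1}{N}$, which is exactly the even-case formula. For $m = 2N+1$ one has $m+N+1 = 3N+2$ and the analogous ratio $N/(2N+3)$ yields $\frac{3}{2N+3}\binom{3N+2}{N}$, matching the odd-case formula. The main (and only) obstacle is careful bookkeeping: one has to verify that the upper limit $\lfloor(n-d+2)/2\rfloor$ of the sum \eqref{eq:numbers} and the half-integer arguments of the target binomials in \eqref{eq:thmnumbern} line up correctly with $N$ in both parities. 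No identity beyond hockey-stick and elementary manipulation of $\binom{\cdot}{\cdot}$ is required, so unlike the proof of Theorem \ref{thm:number} we do not need any analogue of Lemma \ref{lm:simplesum} or an induction on $n$.
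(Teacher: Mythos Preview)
Your proof is correct. The substitution $m=n-d$, $k=l-1$ gives the right summand, the identity $\frac{m-2k+1}{m+1}\binom{m+k}{k}=\binom{m+k}{k}-2\binom{m+k}{k-1}$ checks out, both hockey-stick applications are valid (the $k=0$ term in the second sum vanishes as $\binom{m}{-1}=0$), and the final parity-dependent simplifications match the claimed formulas.

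The paper's route is different and in fact shorter. Instead of evaluating the sum $\sum_{l}D(n,d,l)$ term by term, it invokes the first identity of Lemma~\ref{lm:def}, namely $D(n+1,d,L)=\sum_{i=1}^{L}D(n,d,i)$, with $L=\lfloor\frac{n-d+2}{2}\rfloor$. This collapses the entire sum \eqref{eq:numbers} to the single quantity $D(n+1,d,L)$, and then one application of the closed form from Theorem~\ref{thm:number} finishes the job after a parity split. So the paper trades your hockey-stick computation for one extra use of a recursion already proved. Your approach, by contrast, is self-contained once Theorem~\ref{thm:number} is in hand: it never touches Lemma~\ref{lm:def}, and the two-term expression $\binom{m+N+1}{N}-2\binom{m+N+1}{N-1}$ it produces is a pleasant intermediate form that makes the connection to ballot-type numbers visible. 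Both arguments are short; the paper's is slicker if one has Lemma~\ref{lm:def} available, yours is the more direct finite-sum evaluation.
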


\begin{proof}
Combining everything we have seen so far, we have
\begin{align*}
D(n,d) &\stackrel{\eqref{eq:numbers}}{=} \sum_{j=1}^{\lfloor \frac{n-d+2}{2} \rfloor}{D(n,d,j)} \stackrel{\eqref{eq:lmdeftext}}{=} D(n+1,d,\lfloor \frac{n-d+2}{2} \rfloor) \\
&\stackrel{\eqref{eq:thmnumber}}{=} \frac{\Big((n+1) - 2 \lfloor \frac{n-d+2}{2} \rfloor - d + 3\Big)\Big((n+1)+\lfloor \frac{n-d+2}{2} \rfloor-d-1\Big)!}{\Big(\lfloor \frac{n-d+2}{2} \rfloor-1\Big)!\Big((n+1)-d+1\Big)!}
\end{align*}

For rounding issues we now split the computation, let us first assume that $n-d$ is even. This simplifies the equation above to the following
\begin{align*}
\ldots \, = \frac{2 \big( 3 \frac{n-d}{2} + 1 \big)!}{\big( \frac{n-d}{2} \big)!\big( n - d + 2\big)!} = \frac{2}{n-d+2} {{ 3 \frac{n-d}{2} + 1 } \choose {\frac{n-d}{2}}},
\end{align*}
whereas $n-d$ being odd results in
\begin{align*}
\ldots \, = \frac{3 \big( 3 \frac{n-d}{2} + \frac{1}{2} \big)!}{\big( \frac{n-d-1}{2} \big)!\big( n - d + 2\big)!} = \frac{3}{n-d+2} {{ 3 \frac{n-d}{2} + \frac{1}{2} } \choose {\frac{n-d-1}{2}}}.
\end{align*}
This completes the proof of Theorem \ref{thm:numbern}.
\end{proof}

\section{Conclusion}
\label{sec:conclusion}

We have introduced Tukey depth histograms of $j$-flats, and discussed how they relate to several problems in discrete geometry.
For general histograms, we have shown necessary conditions for a vector to be a histogram.
For histograms of points, we were able to give a full characterization.
This characterization allowed us to give an exact number of possible histograms.
This is a contrast to other representations of point sets, such as order types, where the exact numbers are not known.

It is an interesting open problem to find better necessary and also sufficient conditions, perhaps even characterizations, of histograms of $j$-flats for $j>0$.
We hope that the ideas and arguments in this paper might be useful in this endeavor.

Another interesting open problem is to relate depth histograms to other representations of point sets.
For example in the plane, the order type determines the $\ell$-vectors for each point, but not vice-versa, that is, there are point sets that have the same sets of $\ell$-vectors but different order types.
Similarly, the set of $\ell$-vectors determines the histograms $D^{0,2}$ and $D^{1,2}$.
Is it true that the reverse is also true or are there point sets for which both $D^{0,2}$ and $D^{1,2}$ are the same but whose sets of $\ell$-vectors are different?

Due to there relation to many problems in discrete geometry, we are convinced that the study of depth histograms has the potential to lead to new insights for many problems.

\section*{Acknowledgments}
Patrick Schnider has received funding from the European Research Council under the European Unions Seventh Framework Programme ERC Grant agreement ERC StG 716424 - CASe.

\bibliographystyle{plainurl}
\bibliography{refs}

\end{document}